\def\BibTeX{{\rm B\kern-.05em{\sc i\kern-.025em b}\kern-.08em
		T\kern-.1667em\lower.7ex\hbox{E}\kern-.125emX}}
\long\def\comment#1{}
\newfont{\bbb}{msbm10 scaled 700}
\newfont{\bb}{msbm10 scaled 1100}
\newcommand{\EE}{\mbox{\bb E}}
\newcommand{\xv}{{\bf x}}
\renewcommand{\Am}{{\bf A}}
\newcommand{\Bm}{{\bf B}}
\newcommand{\Cm}{{\bf C}}
\newcommand{\Dm}{{\bf D}}
\newcommand{\Em}{{\bf E}}
\newcommand{\Fm}{{\bf F}}
\newcommand{\Gm}{{\bf G}}
\newcommand{\Hm}{{\bf H}}
\newcommand{\Km}{{\bf K}}
\newcommand{\Lm}{{\bf L}}
\newcommand{\Mm}{{\bf M}}
\newcommand{\Qm}{{\bf Q}}
\newcommand{\Sm}{{\bf S}}
\newcommand{\Tm}{{\bf T}}
\newcommand{\Vm}{{\bf V}}
\newcommand{\Bc}{{\cal B}}
\newcommand{\Fc}{{\cal F}}
\newcommand{\Gc}{{\cal G}}
\newcommand{\Ic}{{\cal I}}
\newcommand{\Mc}{{\cal M}}
\newcommand{\Vc}{{\cal V}}
\newcommand{\Xc}{{\cal X}}
\newcommand{\betav}{\hbox{\boldmath$\beta$}}
\renewcommand{\det}{{\hbox{det}}}
\newcommand{\eqdef}{\stackrel{\Delta}{=}}
\newcommand{\be}{\begin{equation}}
\newcommand{\ee}{\end{equation}}
\newcommand{\beqn}{\begin{eqnarray}}
\newcommand{\eeqn}{\end{eqnarray}}
\newcommand{\ba}{\begin{array}}
	\newcommand{\ea}{\end{array}}
\newcommand{\R}{\mathds{R}}
\newcommand{\N}{\mathds{N}}
\newcommand{\Vs}{\ \!\! \Upsilon_k\!\! \ }
\newcommand{\Vsb}{\ \!\! \overline{\Upsilon}\!\! \ }
\newcommand{\nquad}{\mkern-18mu}
\newcommand{\nqquad}{\mkern-36mu}
\newcommand{\tr}{\textnormal{tr}}
\newcommand{\sminus}{\scalebox{0.5}[1.0]{\( - \)}}
\newcommand{\minus}{\scalebox{0.75}[1.0]{\( - \)}}
\newcommand{\splus}{\scalebox{0.5}[0.5]{\( + \)}}
\newcommand{\vR}{\left[0,1\right]}
\newcommand{\betam}{\bm{\beta}}
\providecommand{\customgenericname}{}
\newcommand{\newcustomtheorem}[2]{%
	\newenvironment{#1}[1]
	{%
		\renewcommand\customgenericname{#2}%
		\renewcommand\theinnercustomgeneric{##1}%
		\innercustomgeneric
	}
	{\endinnercustomgeneric}
}
\DeclareMathOperator*{\argmin}{arg\,min}
\DeclareMathAlphabet{\pazocal}{OMS}{zplm}{m}{n}
\newtheorem{theorem}{Theorem}
\newtheorem{remark}{Remark}
\newtheorem{corollary}{Corollary}
\newtheorem{lemma}{Lemma}
\tikzstyle{block} = [draw, fill=RR_black!0, rectangle, minimum height=0.6cm, minimum width=2cm]
\tikzstyle{rect1} = [rectangle, minimum height=1cm, minimum width=1cm]
\tikzstyle{rect2} = [rectangle, minimum height=1cm, minimum width=0.3cm]
\tikzstyle{rect3} = [fill=RR_black!50,rectangle, minimum height=0.6cm, minimum width=0.6m]
\tikzstyle{comm1} = [draw,dashed, fill=RR_black!10, rectangle, minimum height=1.3cm, minimum width=5.3cm]
\tikzstyle{comm} = [draw,dashed, fill=RR_black!10, rectangle, minimum height=1.6cm, minimum width=5.3cm]
\tikzstyle{rect} = [draw, fill=RR_black!0, rectangle, minimum height=0.6cm, minimum width=0.6cm]
\tikzstyle{sum} = [draw, fill=RR_black!70, circle]
\tikzstyle{input} = [coordinate]
\tikzstyle{output} = [coordinate]
\tikzstyle{pinstyle} = [pin edge={to-,thin,RR_black}]
\definecolor{PU_orange}{RGB}{245,128,37}
\definecolor{PU_orange_light}{RGB}{245,178,78}
\definecolor{RR_blue}{RGB}{16,6,159}
\definecolor{RR_blue_light}{RGB}{0,109,255}
\definecolor{RR_black}{RGB}{30,54,67}
\definecolor{RR_pink}{RGB}{250,70,146}
\definecolor{RR_paragraph}{RGB}{78,93,101}
\definecolor{RR_white}{RGB}{255,255,255}
\title{\LARGE \bf
	%
	Optimal Control of Systems with Multichannel Packet Loss
	\thanks{ This work is supported by  Rolls-Royce, ESPRC, and The Control, Monitoring and Systems Engineering UTC at The University of Sheffield.}} 
\author{William Casbolt, %
	 Bryn Jones%
	 \thanks{William Casbolt, Bryn Jones, and I\~{n}aki Esnaola are with the Department of Automatic Control and Systems Engineering, University of Sheffield, Sheffield S1 3JD, UK, (e-mail: \{wgcasbolt1, esnaola, b.l.jones\}@sheffield.ac.uk ).}, %
	and I\~{n}aki Esnaola%
	\thanks{{I\~{n}aki Esnaola is also with the Department of  Electrical Engineering, Princeton University, Princeton, NJ 08544, USA.}}
}
\begin{document}

	\maketitle
	\begin{abstract}                
	The performance of control systems with input packet losses on the controller to plant communication channel is analysed. %
	The main contribution of this work is a proof that linear optimal control systems operating with UDP-like communication protocols have a larger quadratic cost than the same systems operating with TCP-like protocols.
	The proof is derived for the general case of multidimensional and independent actuation communication channels. %
	In doing so, our results extend previous work to systems with multiple distributed actuators. %
	The difference in cost between two communication protocols is analysed, enabling the maximal difference between the two protocols to be quantified.
	Numerical examples are presented to highlight the difference in costs induced by the choice of communication protocol.
	\end{abstract}
\section{Introduction}
%

%
The surge in number of sensors and smart actuators comes with an increase in wiring within a control system~\cite{LeenG}. %
To this end, wireless communication technology is a readily available tool to reduce the weight and architectural issues of this additional wiring~\cite{KorkuaSurat,AldoThesis,LeenG}. %
Wireless control architectures are attractive in a number of applications, such as aerospace, where reductions in the weight and complexity of the cabling between plant and controller offers significant cost savings~\cite{AldoThesis}, subject, of course, to safety considerations. %
Additionally, to be applicable to a control system with multiple actuators the modelling requires a dedicated communication channel for each actuator as opposed to a single communication channel shared by all actuators. %
However, associated with wireless communication is the issue of packet loss~\cite{KorkuaSurat,AldoThesis}. %
%
%
%
%
%
%
For communication systems with packet loss, Denial of Service (DoS) attacks are a concern~\cite{ZhangCheng}. %
Moreover, DoS attacks can be implemented within a distributed control system to target individual actuators. %
%
%
A generalisation of a singular communication channel, as seen in~\cite{1}, is required to analyse the closed-loop performance degradation incurred by packet loss in individual channels. %
The multidimensional communication channel considered is an extension of the actuator channel model used in~\cite{1} where the channel is modelled as a scalar variable. %
The main result is the derivation of the optimal control law for a system communicating over multiple independent lossy actuation communication channels, operating under two different communication protocols. %
This enables the characterisation of the cost difference between the two different communication protocols. %

In~\cite{1,4,7,8}, two communication protocols are proposed for analysis, namely, a TCP-like protocol and a UDP-like protocol. %
The TCP-like protocol implements an acknowledgement signal that is transmitted back to the controller, confirming whether or not the control signal has been successfully received by the plant. %
In contrast, the UDP-like protocol lacks the acknowledgement link. %
The TCP-like protocol differs from the TCP protocol used in communication literature in that a lost packet is not automatically re-transmitted to the plant since there is no reason for this to be useful any longer to the optimal control input.
This is due to the most recently calculated control signal being the most vital for the plant to receive. %
%
%
For example, due to the fact that there is no estimation performed at the plant, in addition to the presence of uncertainty within the system, the most recent measurement contains the least uncertainty of the current state of the system. %
In the event of a packet loss, the plant performs no actuation as is assumed in~\cite{1,4,7,8} and is the main focus of~\cite{9}. %
In~\cite{1,9,7,8,4,12}, control systems with packet loss in the communication channels between the plant and the controller are modelled and analysed. In doing so, the foundations for control and estimation over lossy communication channels are established. %
The optimal control law and estimator is derived for both protocols in \cite{1} using dynamic programming.
%
%
%
A control system that is susceptible to packet loss on the communication channel from the plant to the controller is considered in~\cite{12}. %
In~\cite{1} and~\cite{8} the work of~\cite{12} is extended to systems with packet loss in both the sensing (plant to controller channel) and the actuation (controller to plant channel) communication channels. %
In~\cite{9}, a comparison of different control strategies in the event of packet loss is studied. %
They study the effect of a smart actuator that supplies the previous input in the event of a packet loss. %
Therein they conclude that there exists a trade off between the zero-input and the hold-input strategies. %
Specifically, in the high packet loss percentage scenario or the `cheap control' scenario the zero-input strategy is superior in terms of Linear Quadratic Gaussian (LQG) cost. %
Where `cheap control' corresponds to a control law that does not penalise the actuators heavily. %
Furthermore, the stability regions for both strategies are the same. %
For that reason, in our work the zero-input strategy is adopted for mathematical simplicity. %
In~\cite{4} and~\cite{7}, systems with and without an acknowledgement link respectively are considered. %
These approaches analyse the performance of the controller, and characterise the trade-off between the control system cost, stability, and the properties of the communication channel.

We consider systems that consists of a plant, a controller and a communication channel, as shown in Fig. \ref{fig:TCP-like-diagram} and~\ref{fig:UDP-like-diagram}. %
This system communicates with one of two communication protocols, a TCP-like protocol or a UDP-like protocol. %
Packet loss only occurs in the controller to actuator communication channel; the sensor to controller communication channel is assumed to be perfect. %
This is done to focus on the impact of the actuation channel within a control setting. %
%
%
The analysis in this paper can be extended to the case with a lossy sensor. %
The actuation communication channel is extended from the previous work~\cite{1} to allow for multiple independent channels as opposed to a single channel shared by all actuators. 
As a result of this, the main contribution is an analytical proof that the system cost is always greater as a result of not monitoring realisations of packet loss in the channel. %
The maximal cost difference between the two protocols is also characterised. %
The packet loss in the communication channel is modelled as a set of Independent and Identically Distributed (IID) Bernoulli random variables. 
As a result, each actuator either receives the optimal input, or it receives zero input. %
This is equivalent to a probabilistic Denial of Service attack on the communication channel. %
The optimal control law is obtained by formulating the problem in a Model Predictive Control (MPC) framework. %
This subsequently enables the analytical comparison of the cost incurred by both protocols in a more tractable fashion that a dynamic programming approach. %

The structure of the rest of the paper is as follows, Section~\ref{sec:System-model} describes the system model and the MPC framework; %
Section~\ref{sec:MPC-Optimal-Control} contains the main results and the derivation of the optimal control law for both protocols; %
Section~\ref{sec:Cost-Difference-Analysis} characterises the cost difference between the protocols; %
Section~\ref{sec:Numerical-Results} contains numerical results from two illustrative case studies; %
Section~\ref{sec:channel-disc} discusses the implications of utilising a multidimensional channel, based upon the second case study; %
and Section~\ref{sec:Conclusion} presents the conclusion.
\begin{figure}[!t]
	\captionsetup{justification=centering,margin=2cm,width=\linewidth}
	\centering
	\includegraphics[]{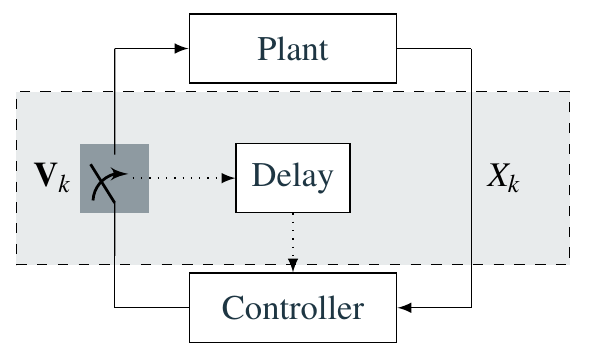}
	\caption[center]{The TCP-like protocol where realisations of the packet transmission variable,~${\bf V}_k$, are transmitted to the controller.}\label{fig:TCP-like-diagram}
\end{figure}
\section{System Model and Problem Formulation}\label{sec:System-model}
\noindent We consider the plant model described by
\beqn\label{eq:plant}
X_{k\splus 1}&=&\Am X_k + \Bm{{\bf V}_k}U_k +W_k, 
\eeqn 
where~$\Am\in \R^{n\times n}$ is the dynamics matrix;~$X_k \in \R^n$ describes the state of the plant at time step~$k \in \N$;~$\Bm \in \R^{n \times m}$ is the control matrix;~${U}_k\in \R^m$ is the vector of control inputs;~$W_k \in \R^n$ is the process noise modelled as a vector of Gaussian random variables with mean~${\bf 0} \in \R^n$ and covariance matrix~$\Sigma_{W} \in S^n_{++}$; where~$S^n_{++}$ is the set of~$n$ by~$n$ symmetric positive definite matrices;~${\bf V}_k \in S^{m}_{+}$ is the packet transmission variable modelled as a diagonal matrix where the~$i$-th diagonal entry is an IID Bernoulli random variable with mean~$\mu_i \in \vR $; and~$S^m_{+}$ is the set of~$m$ by~$m$ symmetric non-negative definite matrices. The initial state of the plant is determined by the Gaussian distributed vector of random variables~$X_k$ with mean~${\overline{X}_k}$ and covariance matrix ~${ \Sigma_{X_k}} \in S^n_{++}$. %
Additionally, the expected value of~$\Vm_k$ is~$\EE[\Vm_k]=\Mm$, where~$\Mm\in S^{m}_{++}$ is a diagonal matrix in which the i-th diagonal element is~$\mu_i$. %
\begin{figure}[!t]
	\captionsetup{justification=centering,margin=2cm,width=\linewidth}
	\centering
	\includegraphics[]{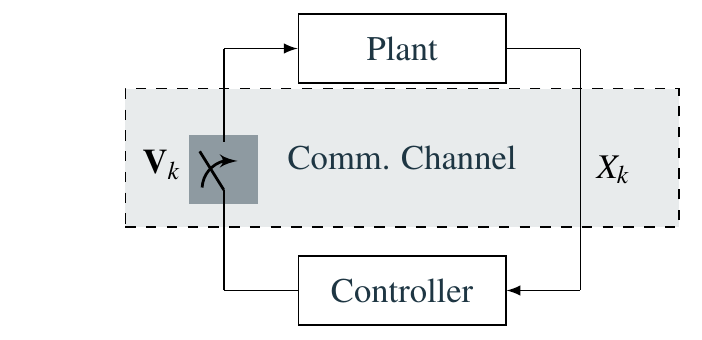}
	\caption{The UDP-like protocol where the realisations of the packet transmission variable,~${\bf V}_k$, are not transmitted to the controller}\label{fig:UDP-like-diagram}
\end{figure}

It is in the structure of~$\Vm_k$ and~$\Mm$ that the system model differs from~\cite{1,9,7,8,4,12}. %
%
%
Defining~$\Mm$ as a matrix models the case in which the system communicates over~$m$ independent channels, where each actuator has a single dedicated communication channel. %
Specifically, the~$m$-th control input communicates through the~$m$-th channel which is completely characterised by the~$m$-th diagonal entry of~$\Mm$. %
In contrast, were~$\Mm$ and~$\Vm_k$ to be defined as a scalars all control inputs would share a single communication channel that is fully characterised by these scalars. %
This setting generalises the packet loss communication channel seen in~\cite{1}. 
Due to the imperfect communication between the controller and the plant, the operator implements a communication protocol. %
We adopt two protocol paradigms proposed by \cite{1}, namely a UDP-like protocol that does not monitor the communication channel, and a TCP-like protocol that acknowledges receipt of the packet from the controller by sending an {\it acknowledgement} message to the controller over an auxiliary channel. %
As in~\cite{1} it is assumed that the auxiliary channel has perfect communication. %
The difference between both protocol paradigms is depicted in Fig. \ref{fig:TCP-like-diagram} and Fig. \ref{fig:UDP-like-diagram} for the TCP-like and the UDP-like protocols respectively. %
The choice of protocol paradigm for a system results in different information available for the controller. %
We define the information available at the controller for each protocol with the following two information sets
 \newsavebox{\smlVk}
 \savebox{\smlVk}{$\left(\begin{smallmatrix} \Vs^{k-2} &{\bf 0} \\ {\bf 0} & {\bf V}_{k\sminus1} \end{smallmatrix}\right)$}
\beqn
\Ic_k\eqdef
\begin{cases}\label{eq:information}
	\Fc_k=\left\lbrace \Xc^k, \Vc^{k\sminus1}\right\rbrace,&\mbox{TCP-like},\\
	\Gc_k=\left\lbrace \Xc^k\right\rbrace,&\mbox{UDP-like},\\
\end{cases}
\eeqn
%
%
\noindent where~$\Vc^{k\sminus1}\eqdef \left\lbrace \Vm_0, \Vm_1, \ldots, \Vm_{k\sminus1}\right\rbrace$ and~$\Xc^k\eqdef \left\{X_0,X_{1},\dots,X_k\right\}$. %
Note that all sets are monotonically increasing, i.e.~$\Ic_{k} \subseteq \Ic_{k\splus 1}$. %
Additionally, the lower indices represent the time index whereas upper indices refers to the dimension. %
The information set at each time step contains the information from all previous time steps in addition to the information from the current time step. %
Under the TCP-like protocol the controller has access to the realisation of the packet transmission variable,~${\bf V}_k$, when performing state estimation and incorporates it in the error prediction to obtain an estimate with error
\begin{subequations}\label{eq:err-scalar}
	\beqn\label{eq:TCP-scalar-err}
	E_{k\splus 1}\! \left(\! \Fc_k \! \right)\nquad \ &\eqdef&\nquad \  X_{k\splus 1}-\EE\left[X_{k\splus 1}{\Big |}\Fc_k,{\bf V}_k\right]\nonumber \\
	&=& \nquad \  \Am  X_k +  \Bm{{\bf V}_k}U_k \! \left(\! \Fc_k \! \right) +W_k - \Am  \widehat{X}_k -  \Bm{{\bf V}_k}U_k \! \left(\! \Fc_k \! \right) \nonumber\\
	&=& \nquad \  \Am E_{k}\!\!\left(\!\Fc_{k\sminus1}\!\right)\!\! +W_k,
	\eeqn 
where~$\widehat{X}_{k} \eqdef \EE\left[X_k\right]$. %
%
%
The function~$U_k\!\left(\! \Ic_k \! \right)$ is a function of the information set~$\Ic_k$ and takes the form of a state feedback law, and is therefore, a random variable,. %
The UDP-like protocol error prediction differs from the TCP-like protocol in that there is no knowledge of the realisation of~$\Vm_k$, and therefore, the error for the UDP-like protocol is given by
	\beqn
	E_{k\splus 1}\! \left(\! \Gc_k \! \right) \nquad \ &\eqdef& \nquad \  X_{k\splus 1}-\EE\left[X_{k\splus 1}{\Big |}\Gc_k\right] \nonumber \\
	&=& \nquad \   \Am  X_k +  \Bm{{\bf V}_k}U_k \! \left(\! \Gc_k \! \right) +W_k - \Am  \widehat{X}_k -  \Bm\Mm U_k \! \left(\! \Gc_k \! \right) \nonumber\\
	&=& \nquad \  \Am E_{k}\!\!\left(\! \Gc_{k\sminus1} \!\right)\!\! + \Bm\left({\bf V}_k - \Mm\right)U_k \! \left(\! \Gc_k \! \right) + W_k.
	\eeqn 
\end{subequations}
Note that the error prediction for both the UDP-like and the TCP-like protocol resemble those in~\cite{1}. %
%
%
As shown in~\cite{1}, the optimal linear control law for the UDP-like protocol can only be obtained when perfect state information is available. %
Indeed, the lack of knowledge about the packet loss breaks the separation structure between optimal estimation and optimal control. %
%
%
%
It is here our derivation diverges again from~\cite{1}. %
Assuming perfect knowledge of the realisation of~$X_k$,~(\ref{eq:plant}) is expanded over a time horizon~$N\in\N$ as follows:
	 \beqn\label{eq:predict}\!\!\!\!
	\ba{c}
	X_{k\splus 1} \\ X_{k\splus 2} \\ \vdots \\ X_{k\splus N}
	\ea\!\!\!\!\!\!\!\!
	\ba{c}
	=\\=\\ {\vdots }\\=
	\ea 
	\nquad
	\ba{c}
	\Am  X_k \\\Am  X_{k\splus 1} \\ \vdots \\ \Am  X_{k\splus N\sminus1}
	\ea
	\nquad
	\ba{c}
	+\\+\\\vphantom{\vdots } \\+
	\ea 
	\nquad
	\ba{c}
	\Bm{\bf V}_kU_k \! \left(\! \Ic_k \! \right) \\ \Bm{\bf V}_{k\splus 1} U_{k\splus 1}\! \left(\! \Ic_k \! \right) \\ \vdots\\  \Bm{\bf V}_{k\splus N\sminus1}U_{k\splus N\sminus1}\! \left(\! \Ic_k \! \right)
	\ea
	\nquad
	\ba{c}
	+\\+\\\vphantom{\vdots } \\+
	\ea 
	\nquad
	\ba{c}
	W_k ,\\
	W_{k\splus 1} ,\\
	\vdots\\
	W_{k\splus N\sminus1}.
	\ea\!\!\!\!\!\!
	\eeqn 
Exploiting the recursive structure yields
	 \beqn\label{eq:substitution-predict}
	 X_{k\splus1} \nquad   &=& \nquad  \Am  X_k +	\Bm {\bf V}_k U_k \! \left(\! \Ic_k \! \right) +	W_k,\nonumber \\
	 X_{k\splus2} \nquad  &=& \nquad  \Am^2 X_k +\Am\Bm{\bf V}_k U_k \! \!\left(\! \Ic_k \! \right) + \Bm{\bf V}_{k\splus1} U_{k\splus1}\! \! \left(\! \Ic_k \! \right)
	 %
 	 +	\Am W_k +W_{k\splus1},\nonumber\\
	 &\vdots& \nonumber \\
	 X_{k\splus N} \nquad  &=& \nquad \Am^N  X_k\! +\! \Am^{N\sminus1}\Bm {\bf V}_{k}U_k \! \left(\! \Ic_k \! \right)\! +\! \dots \nonumber \\
	 && \nquad  +  \Bm {\bf V}_{k\splus N\sminus1} U_{k\splus N\sminus1}\! \left(\! \Ic_k \! \right) + \Am^{N\sminus1}W_k + \dots + W_{k\splus N\sminus1}.\nonumber
	\eeqn 
\begin{figure*}[!t]
	\fontsize{9pt}{2pt}
	\begin{align}\label{big}\!\!\!\!
		\underset{\pazocal{X}_k}{\underbrace{\left(\!\!\!\!\ba{c}X_{k\splus 1} \\ X_{k\splus 2} \\ \vdots \\ X_{k\splus N}
				\ea\!\!\!\!\right)}}\!\!\!
		= \!\!\!
		\underset{\Phi}{\underbrace{\left(\!\!\!\!\ba{c}
				\Am \\ \Am^2 \\ \vdots \\ \Am^N 
				\ea\!\!\!\!\right)}}\!X_k\!
		+ \!\!\!
		\underset{\Gamma}{\underbrace{\left(\!\!\!\!\ba{cccc}
				\Bm &\bf 0 &\dots & \bf 0\\
				\Am\Bm & \Bm& \ddots& \vdots\\
				\vdots& \ddots& \ddots& \bf 0 \\
				\Am^{N\sminus1}\Bm & \dots & \Am \Bm & \Bm
				\ea\!\!\!\!\right)}}\
		\underset{\Vs}{\underbrace{\left(\!\!\!\!\!\!\ba{cccc}
				{\bf V}_k &\bf 0 &\dots & \bf 0 \\
				\bf 0 & {\bf V}_{k\splus 1} & \ddots & \vdots \\
				\vdots & \ddots & \ddots & \bf 0 \\
				\bf 0 &\dots & \bf 0 & {\bf V}_{k\splus N\sminus1}
				\ea\!\!\!\!\!\! \right)}}\!\!
		\underset{\pazocal{U}_k \! \left(\! \Ic_k \! \right)}{\underbrace{\left(\!\!\!\!\!\!\ba{c}
				U_k \!\! \left(\! \Ic_k \! \right)\\
				U_{k\splus 1}\!\! \left(\! \Ic_k \! \right)\\
				\vdots\\
				U_{k\splus N\sminus1}\!\! \left(\! \Ic_k \! \right)
				\ea\!\!\!\!\!\! \right)}}\!\!
		+\!\!
		\underset{\Lambda}{\underbrace{\left(\!\!\!\!\!\!\ba{cccc}
				{\bf I} &\bf 0 &\dots & \bf 0\\
				\Am & {\bf I} & \ddots& \vdots\\
				\vdots& \ddots& \ddots& \bf 0 \\
				\Am^{N\sminus1} & \dots & \Am & {\bf I}
				\ea\!\!\!\!\right)}}\!
		\underset{\pazocal{W}_k}{\underbrace{\left(\!\!\!\!\!\!\ba{c}
				W_k \\
				W_{k\splus 1}\\
				\vdots \\
				W_{k\splus N\sminus1}
				\ea\!\!\!\!\!\!\right)}}
	\end{align}
	\hrulefill
\end{figure*}
\normalsize
$\!\!$Re-writing (\ref{eq:substitution-predict}) in matrix form yields (\ref{big}) seen at the top of the next page. %
Re-casting (\ref{big}) as a prediction matrix equation gives
\beqn\label{eq:matrix-plant}
{\pazocal{X}}_k\eqdef\Phi X_k +\Gamma {\Vs} \pazocal{U}_k \! \left(\! \Ic_k \! \right) + \Lambda {\pazocal{W}}_k,
\eeqn 
where~$\Phi \in \R^{Nn\times n}$ is the dynamics matrix over the prediction horizon;~%
$\pazocal{X}_k \in \R^{Nn}$ is the state prediction vector;~%
$\Gamma \in \R^{Nn\times Nm}$  is the propagation matrix for the control over the prediction horizon;~%
${\pazocal{U}_k}\!\!\left(\!\Ic_k\!\right)\!\! \in \R^{Nm}$ is the realisation at time step~$k$ of the control law computed with access to the information set $\Ic_k$;~%
$\Lambda \in \R^{Nn\times Nn}$ is the propagation matrix for the process noise;~%
${\pazocal{W}_k} \in \R^{Nn}$ is the process noise over the prediction horizon with mean~$\bf 0$ and covariance~$\Sigma_{\pazocal{W}}$;~%
$\Sigma_{\pazocal{W}}\in S^{Nn}_{++}$ is the diagonal block matrix where the~$i$-th block is~$\Sigma_{W}$;~
$\Vs\in S^{Nm}_{+}$ is a diagonal matrix with the Bernoulli random variables describing the packet transmission over the prediction horizon in the diagonal;%
and~$\Vsb\in S^{Nm}_{++}$ is the block diagonal matrix where the~$i$-th block is~$\Mm$, and therefore,~$\EE\left[\Vs \right]=\Vsb$. %
%
%
To control the system over the horizon, N, the controller calculates the expected state trajectory,~$\widehat{\pazocal{X}}_k$. %
Note that for both protocols the estimate coincides, due to the fact that neither protocol knows the realisation of~${\bf V}_k$ before actuating. %
The expected state trajectory for both protocols is therefore given by
	 \beqn\label{eq:pred-err}
	\widehat{\pazocal{X}}_{k}^{} \eqdef \EE\left[{\pazocal{X}}_k{\Big |}\Ic_k\right] = \Phi X_k +\Gamma \Vsb{\pazocal{U}_k \! \left(\! \Ic_k \! \right)} .
	\eeqn 
%
In the TCP-like regime the operator does not know the realisation of a packet transmission before actuating, which results in~(\ref{eq:pred-err}), but knows the packet transmission realisation when updating the state estimate, which results in (\ref{eq:TCP-scalar-err}). %
The TCP-like protocol only estimates the packet transmission for the optimal control problem. %
In contrast, the UDP-like protocol packet transmission variables are estimated for both the estimation and the optimal control problem. %
Expanding the update error terms in~(\ref{eq:err-scalar}) over the prediction horizon of~$N$ time-steps results in
	\begin{subequations}
	 \beqn
	\Em_k \! \left(\! \Fc_k \! \right) \nquad \ &\eqdef& \nquad \ {\pazocal{X}}_k - \EE\left[{\pazocal{X}}_k{\Big |}\Fc_k, \Vs \right] \nonumber \\
	&=&\nquad \ \Phi X_k +\Gamma {\Vs}{\pazocal{U}_k \! \left(\! \Fc_k \! \right)} + \Lambda {\pazocal{W}}_k  - \widehat{\pazocal{X}}_{k} \nonumber \\
	&=&\nquad \  \Phi X_k +\Gamma {\Vs}{\pazocal{U}_k \! \left(\! \Fc_k \! \right)} + \Lambda {\pazocal{W}}_k - \Phi X_k -\Gamma {\Vs}{\pazocal{U}_k \! \left(\! \Fc_k \! \right)} \nonumber\\
	&=&\nquad \   \Lambda {\pazocal{W}}_k, \label{eq:TCP-prediction-error}
	\eeqn
	\beqn
	\Em_k \! \left(\! \Gc_k \! \right) \nquad \ &\eqdef& \nquad \ {\pazocal{X}}_k - \EE\left[{\pazocal{X}}_k{\Big |}\Gc_k\right] \nonumber \\
	&=& \nquad \ \Phi X_k +\Gamma {\Vs}{\pazocal{U}_k \! \left(\! \Fc_k \! \right)} + \Lambda {\pazocal{W}}_k  - \widehat{\pazocal{X}}_{k} \nonumber \\
    &=& \nquad \ \Phi X_k +\Gamma {\Vs}{\pazocal{U}_k \! \left(\! \Gc_k \! \right)} + \Lambda {\pazocal{W}}_k - \Phi X_k -\Gamma \Vsb{\pazocal{U}_k \! \left(\! \Gc_k \! \right)} \nonumber\\
	&=&  \Gamma\left(\Vs - \Vsb\right){\pazocal{U}_k \! \left(\! \Gc_k \! \right)} + \Lambda {\pazocal{W}}_k . \label{eq:UDP-prediction-error}
	\eeqn 	
	\end{subequations}
In this setting we formulate a Linear Quadratic Gaussian (LQG) control problem, i.e. the system operator minimises a quadratic function of the states and inputs. %
This function is weighted with diagonal state penalty matrix~$\Omega\in S_{++}^{Nn}$, diagonal input penalty matrix~$\Psi\in  S_{++}^{Nm}$, and diagonal matrix~$\Qm\in S^n_{++}$. %
Note that the penalties at each time step may vary. %
Since~$\pazocal{W}_k$ is random, then the state of the plant is random, which yields a stochastic model predictive control problem~\cite{5}. %
The cost function to be minimised is the expected cost defined as
\begin{subequations}
	 \beqn
	 %
J\left(\! \Ic_k \! \right)\nquad \ &\eqdef& \nquad \ \EE\left[\left. X_k^{\sf T} \Qm X_k + {\pazocal{X}}_k^{\sf T} \Omega {\pazocal{X}}_k \right.\right. \nonumber \\
&& \qquad \qquad \left.\left. + \pazocal{U}^{\sf T}_k\!\! \left(\! \Ic_k \! \right)\!\! \Vs^{\!\!\!\! \sf T} \Psi \Vs {\pazocal{U}_k}\!\! \left(\! \Ic_k \! \right)\! \right|\Ic_k\right]\!\!,\!\! \label{eq:plant0}
	\eeqn 
\noindent where the expectation in (\ref{eq:plant0}) is with respect to the joint distribution of~$\Vs$ and~${\pazocal{W}}_k$. %
The expectation is taken sequentially as in \cite[Lemma~$1$(c)]{4} to account for the causality constraints imposed by the system. %
Therein, the expectation at each time step is conditioned on all previous time steps. %
This is due to the fact that the sequence of states at each time step forms a Markov chain, i.e.~$X_k\rightarrow X_{k\splus 1}\rightarrow\dots\rightarrow X_{k\splus N}$. %
The state trajectory~${\pazocal{X}}_k$ is re-written in terms of the estimate~$\widehat{\pazocal{X}}_{k}^{}~$ and the error induced by the estimate~$\Em_{k}$. %
Substituting~$\pazocal{X}_k =\widehat{\pazocal{X}}_k+\Em_{k}$ into (\ref{eq:plant0}) yields
	 \beqn\label{eq:lqg-sub-err}
	\!\!J\!\! \left(\! \Ic_k \! \right)\nquad \ &=& \nquad \ \EE\!\! \left[\left.\! X_k^{\sf T}\!\! \Qm X_k\! + \! \left(\widehat{\pazocal{X}}_{k}^{}\!\! + \! {\Em_{k}}\right)^{\sf\!\! T}\!\!\! \Omega \! \left(\widehat{\pazocal{X}}_{k}^{}\!\! + \! {\Em_{k}}\right)\! \right. \right. \nonumber \\
	&& \qquad \qquad  \left. \left. \vphantom{\left(\widehat{\pazocal{X}}_{k}^{}\!\! + \! {\Em_{k}}\right)^{\sf\!\! T}} %
	+ \! {\pazocal{U}_k}^{\sf\!\!\! T}\!\! \left(\! \Ic_k \! \right)\!\! \Vs^{\sf\!\! T} \! \Psi \Vs {\pazocal{U}_k}\!\! \left(\! \Ic_k \! \right)\! \right|\! \Ic_k \!\! \right]\!\! .
	\eeqn 
	\end{subequations}
The optimal control problem is to find the input sequence~$\pazocal{U}^{*}_{k}$ that minimises (\ref{eq:lqg-sub-err}). %
Additionally, it should be noted that~$\EE\left[\Em_{k}{\big |}\Ic_k\right]=0$ and the state error and the state estimate are independent for both protocols. %
The proofs of these statements are provided in~\cite{4}. %
which leads to the following optimal cost definition:
\beqn\label{eq:opt-lqg}
&&\nqquad J^*\left(\! \Ic_k \! \right)\eqdef \min_{{\pazocal{U}_k}\!\! \left(\! \Ic_k \! \right)\!}\left\{\EE \left[X_k^{\sf T} \Qm X_k+ \widehat{\pazocal{X}}_{k}^{\sf T}\Omega\widehat{\pazocal{X}}_{k}^{}+ \Em_{k}^{\sf T} \Omega{\Em_{k}}\vphantom{\Big|} \right.\right. \nonumber \\
	&& \qquad \qquad \quad  \left. \left. \vphantom{\left(\widehat{\pazocal{X}}_{k}^{}\!\! + \! {\Em_{k}}\right)^{\sf\!\! T}} %
+ \! {\pazocal{U}_k}^{\sf\!\!\! T}\!\! \left(\! \Ic_k \! \right)\!\! \Vs^{\sf\!\! T} \! \Psi \Vs {\pazocal{U}_k}\!\! \left(\! \Ic_k \! \right)\! \right|\! \Ic_k \!\! \right]\!\! .
	\eeqn 
\section{MPC Optimal Cost Derivation and Analysis}\label{sec:MPC-Optimal-Control}
The derivation of the optimal control law is therefore recast into solving the minimisation in (\ref{eq:opt-lqg}) for both communication protocols. %
The first and second term on the right hand side of (\ref{eq:opt-lqg}) are not random due to the information available and are therefore unaffected by the expectation. %
Furthermore, since the first term does not depend on~${\pazocal{U}_k}$ the minimisation is rewritten as
	 \beqn\label{eq:opt-lqg-simp}
	J^*\left(\! \Ic_k \! \right)\!\!\!\!\!\!\!\!&=&\!\!\!\!\!\!\!\! X_k^{\sf T} \Qm X_k+ \min_{{\pazocal{U}_k}\!\! \left(\! \Ic_k \! \right)\!}\left\{ \widehat{\pazocal{X}}^{\sf T}_{k}\Omega\widehat{\pazocal{X}}_{k}^{} \right. \nonumber \\
	&& \left. \!\! + \EE \left[\!\! \left. \vphantom{\widehat{\pazocal{X}}^{\sf T}_{k} } %
	 \Em_{k}^{\sf T} \Omega{\Em_{k}} + \! {\pazocal{U}_k}^{\sf\!\!\! T}\!\! \left(\! \Ic_k \! \right)\!\! \Vs^{\sf\!\! T} \! \Psi \Vs {\pazocal{U}_k}\!\! \left(\! \Ic_k \! \right)\! \right| \Ic_k\right]\right\}\!\!. \quad \ 
	\eeqn 
%
%
The computation of the expectation of the last term can be simplified by using the commutation properties of diagonal matrices and the idempotency of the matrix~$\Vs$. %
Additionally, due to the causality imposed on the system,~${\pazocal{U}_k}\!\! \left(\! \Ic_k \! \right)\!$ does not depend on the future {\it realisations} of~$\Vm_k$ or~$W_k$, and therefore, is not affected by the expectation. %
Note, that this still allows for a~${\pazocal{U}_k}\!\! \left(\! \Ic_k \! \right)\!$ that depends on the statistics of each of these variables, just not the future {\it realisations}. %
The last term in (\ref{eq:opt-lqg-simp}) is %
	\beqn
	\EE\left[ \left. \! {\pazocal{U}_k}^{\sf\!\!\! T}\!\! \left(\! \Ic_k \! \right)\!\! \Vs^{\sf\!\! T} \! \Psi \Vs {\pazocal{U}_k}\!\! \left(\! \Ic_k \! \right)\! \right|\Ic_k\right] \nquad \ &=& \nquad \ 
	%
	%
	{\pazocal{U}_k \! \left(\! \Ic_k \! \right)^{\!\! \sf T}} \Vsb \Psi {\pazocal{U}_k \! \left(\! \Ic_k \! \right)}. \ \ \ \label{eq:plant4}
	\eeqn 
Therefore, (\ref{eq:opt-lqg-simp}) is equivalent to %
	 \beqn\label{eq:LQG-cost}
	&&\nqquad J^*\left(\! \Ic_k \! \right)= X_k^{\sf T} \Qm X_k + \min_{{\pazocal{U}_k}\!\! \left(\! \Ic_k \! \right)\!}\left\{ \widehat{\pazocal{X}}_{k}^{\sf T}\Omega\widehat{\pazocal{X}}_{k}^{} + {{\pazocal{U}_k \! \left(\! \Ic_k \! \right)}{^{\sf T}}} \Vsb \Psi {\pazocal{U}_k \! \left(\! \Ic_k \! \right)}\vphantom{\Big |}\right. \nonumber \\
	&& \qquad\qquad\qquad\qquad\qquad\left.  +\EE \left[ \Em_{k}^{\sf T} \Omega{\Em_{k}}{\Big |}\Ic_k\right]\right\}.
	\eeqn 
The term involving the expected state trajectory is combined with~(\ref{eq:pred-err}) to give
	\beqn
	J^*\left(\! \Ic_k \! \right) \nquad \  &=& \nquad \  X_k^{\sf T} \left(\Qm + \Omega_{p}\right)X_k
 + \min_{{\pazocal{U}_k}\!\! \left(\! \Ic_k \! \right)\!}\left\{\EE \left[ \Em_{k}^{\sf T} \Omega{\Em_{k}}{\Big |}\Ic_k\right] \right. \nonumber \\
	&&  \nquad  \left. + {\pazocal{U}_k \! \left(\! \Ic_k \! \right)^{\!\! \sf T}} \Vsb \left(2\Omega_{gp} X_k + \left(\Omega_{g} \Vsb+ \Psi \right){\pazocal{U}_k \! \left(\! \Ic_k \! \right)}\right)\!\! \right\}\!\! , \ \ \ \label{eq:opt-lqg-simplest}
	\eeqn 
\noindent where~$\Omega_{p}=\Phi^{\sf T}\Omega\Phi$,~$\Omega_{g}=\Gamma^{\sf T}\Omega\Gamma$, and~$\Omega_{gp}=\Gamma^{\sf T}\Omega\Phi$. 

Evaluating the quadratic error requires knowledge of second order statistics. %
It is in this step that the differences between the UDP-like protocol and the TCP-like protocol become apparent. %
This observation leads to the first lemma.
\begin{lemma}\label{lem:error-proof}
Consider the system modelled by (\ref{eq:plant}) with access to (\ref{eq:information}). %
Then the following holds
\begin{subequations}
	 \beqn
\!\!  \EE\!\! \left[\! \Em_{k}^{\sf T}\! \Omega{\Em_{k}} {\Big |}\Fc_k\right]\nquad \! &=&\nquad  \tr\left( \! \Omega_{l} \Sigma_{\pazocal{W}} \! \right)\!\!,\label{eq:TCP-err}\\
\!\! \EE\!\! \left[ \Em_{k}^{\sf T} \Omega{\Em_{k}} {\Big |}\Gc_k\right]\nquad  &=&\nquad  {\pazocal{U}_k \! \left(\! \Gc_k \! \right)^{\!\! \sf T}}\!  \Vsb\! \left({\bf I}\odot \Omega_{g}\right) \! \left({\bf I}- \Vsb\right) \!\! {\pazocal{U}_k \! \left(\! \Gc_k \! \right)} \! \nonumber \\
&& \qquad \qquad  + \! \tr\left( \! \Omega_{l} \Sigma_{\pazocal{W}} \! \right)\!\!,\ \quad  \label{eq:UDP-err}
	\eeqn 
\end{subequations}
where~$\Omega_{l}=\Lambda\Omega\Lambda$ and~$\odot$ denotes the Hadamard product.
\end{lemma}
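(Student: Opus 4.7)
My plan is to substitute the explicit expressions for $\Em_k(\Fc_k)$ and $\Em_k(\Gc_k)$ derived in (\ref{eq:TCP-prediction-error}) and (\ref{eq:UDP-prediction-error}) into the quadratic form $\Em_k^{\sf T}\Omega\Em_k$, expand the product, and then evaluate the conditional expectation one term at a time, invoking independence between the packet-loss matrix $\Vs$ and the process-noise vector $\pazocal{W}_k$, together with the measurability of $\pazocal{U}_k(\Ic_k)$ with respect to $\Ic_k$.

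For the TCP-like case the argument is immediate: since $\Em_k(\Fc_k)=\Lambda\pazocal{W}_k$, the quadratic form reduces to $\pazocal{W}_k^{\sf T}\Lambda^{\sf T}\Omega\Lambda\pazocal{W}_k$, and using the cyclic property of the trace together with $\EE[\pazocal{W}_k\pazocal{W}_k^{\sf T}]=\Sigma_{\pazocal{W}}$ and $\EE[\pazocal{W}_k]=\mathbf{0}$ yields $\tr(\Omega_l \Sigma_{\pazocal{W}})$. This handles (\ref{eq:TCP-err}).

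For the UDP-like case I would expand
\begin{equation*}
\Em_k^{\sf T}\Omega\Em_k = \pazocal{U}_k^{\sf T}(\Vs-\Vsb)^{\sf T}\Omega_g(\Vs-\Vsb)\pazocal{U}_k + 2\pazocal{U}_k^{\sf T}(\Vs-\Vsb)^{\sf T}\Gamma^{\sf T}\Omega\Lambda\pazocal{W}_k + \pazocal{W}_k^{\sf T}\Lambda^{\sf T}\Omega\Lambda\pazocal{W}_k,
\end{equation*}
and take conditional expectation given $\Gc_k$. The cross term vanishes because $\pazocal{W}_k$ has zero mean and is independent of both $\Vs$ and $\Gc_k$, while the noise term reproduces $\tr(\Omega_l\Sigma_{\pazocal{W}})$ exactly as in the TCP case. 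The remaining piece is the one that carries the protocol-dependent cost and requires the real work.

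The key step—and the main obstacle—is evaluating the matrix $\EE[(\Vs-\Vsb)\Omega_g(\Vs-\Vsb)]$, which after pulling $\pazocal{U}_k(\Gc_k)$ out of the expectation (it is $\Gc_k$-measurable) is what must be simplified. The trick is to exploit the diagonal structure of $\Vs$ and $\Vsb$: writing the $(i,j)$ entry entrywise gives $(v_i-\mu_i)(v_j-\mu_j)[\Omega_g]_{ij}$, and because the Bernoulli entries along the diagonal of $\Vs$ are mutually independent, the cross moments vanish for $i\neq j$ and on the diagonal one obtains $\mu_i(1-\mu_i)[\Omega_g]_{ii}$. Re-assembling this into matrix form produces $\Vsb(\Id-\Vsb)(\Id\odot\Omega_g)$, and since all three factors are diagonal they commute freely, which can be rearranged as $\Vsb(\Id\odot\Omega_g)(\Id-\Vsb)$ to match the form stated in (\ref{eq:UDP-err}). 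Care is needed in bookkeeping the transpose (harmless since the relevant matrices are symmetric/diagonal) and in justifying why the Hadamard-product expression arises naturally from the independence of distinct diagonal Bernoulli entries.
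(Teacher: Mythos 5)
Your proposal is correct and follows essentially the same route as the paper: substitute the error expressions, kill the cross term via the zero mean of $\pazocal{W}_k$, recover $\tr(\Omega_l\Sigma_{\pazocal{W}})$ from the noise term, and reduce the protocol-dependent term to an entrywise second-moment computation on the diagonal Bernoulli matrix, where independence of distinct entries kills the off-diagonal contributions and the Bernoulli variance $\mu_i(1-\mu_i)$ produces the Hadamard factor $\Id\odot\Omega_g$. The only (cosmetic) difference is that you evaluate the centered moment $\EE[(\Vs-\Vsb)\Omega_g(\Vs-\Vsb)]$ directly, whereas the paper first expands it as $\EE[\Vs\Omega_g\Vs]-\Vsb\Omega_g\Vsb$ and evaluates the uncentered moment in a separate auxiliary lemma.
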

\proof
See Appendix.
\newsavebox{\smlA}
\savebox{\smlA}{$\left(\begin{smallmatrix} 1.03&0.005\\0.35&0.5\end{smallmatrix}\right)$}
\newsavebox{\smlV}
\savebox{\smlV}{$\left(\begin{smallmatrix}0.9&0\\0&0.5\end{smallmatrix}\right)$}

Lemma \ref{lem:error-proof} highlights that the UDP-like quadratic error term depends on~$\pazocal{U}_k \! \left(\! \Gc_k \! \right)$, whereas the TCP-like protocol does not. %
Additionally,~(\ref{eq:opt-lqg-simplest}) shows that the quadratic error term lies within the minimisation. %
Therefore, the term for the TCP-like quadratic error is removed from the minimisation in~(\ref{eq:opt-lqg-simplest}) whereas the UDP-like term is not. %
Due to this, the derivation of the optimal control law is at this point split into two cases.
\begin{theorem}\label{th:contol}
	Consider the closed-loop systems shown in Fig.~\ref{fig:TCP-like-diagram} and Fig.~\ref{fig:UDP-like-diagram}, with plant dynamics given in~(\ref{eq:plant}), protocol dependent information sets given in~(\ref{eq:information}), and controller cost function given in~(\ref{eq:opt-lqg}), respectively. %
	Then the optimal cost for the TCP-like protocol is
%
	%
	\begin{subequations}
	
	\beqn\label{eq:TCP-opt-cost}
	J^*\left(\! \Fc_k \! \right)\nquad \ &=& \nquad \ \!\! X_k^{\sf T}\!\!\left( \Qm +\Omega_{p} \right)\!\!X_k\!+\! \tr\left(\Sigma_{\pazocal{W}}\Omega_{l}\!\right)\!\! \nonumber \\
	&& \qquad\qquad  -\!  X_k^{\sf T}\! \Omega_{gp}^{\sf T}\!\Gm^{\sminus1}\!\!\left(\!\Fc_k\!\right)\!\Vsb \Omega_{gp} X_k ,
	\eeqn
	and the optimal cost for the UDP-like protocol is
	\beqn\label{eq:UDP-opt-cost}
	J^*\left(\! \Gc_k \! \right) \nquad \ &=&\nquad \ \!\! X_k^{\sf T}\!\!\left( \Qm +\Omega_{p} \right)\!\!X_k\!+\! \tr\left(\Sigma_{\pazocal{W}}\Omega_{l}\!\right)\!\! \nonumber \\
	&& \qquad\qquad  -\!  X_k^{\sf T}\! \Omega_{gp}^{\sf T}\!\Gm^{\sminus1}\!\!\left(\!\Gc_k\!\right)\!\Vsb \Omega_{gp} X_k.
	\eeqn		
\end{subequations}
	The corresponding optimal control laws are
	\begin{subequations}\label{eq:control-laws}
	\beqn
	{\pazocal{U}^{^*}_k \! \left(\! \Fc_k \! \right)}\nquad \  &\eqdef&\nquad \ - \left(\Omega_{g} \Vsb +\Psi\right)^{\sminus1} \Omega_{gp} X_k \label{eq:opt-law-1}\\
	&=& \nquad \ - \Gm^{\sminus1}\!\!\left(\!\Fc_k\!\right)\Omega_{gp} X_k, \nonumber \\
	{\pazocal{U}^{^*}_k \! \left(\! \Gc_k \! \right)} \nquad \  &\eqdef&\nquad \ - \left(\Psi +\left({\bf I} \odot \Omega_{g}\right)\left({\bf I} -\Vsb\right) + \Omega_{g}\Vsb\right)^{\sminus1} \Omega_{gp} X_k \quad \  \label{eq:opt-law-2} \\
	&=& \nquad \ - \Gm^{\sminus1}\!\!\left(\!\Gc_k\!\right)\Omega_{gp} X_k, \nonumber 
	\eeqn	
\end{subequations}
	for the TCP-like and the UDP-like protocols, respectively.
\end{theorem}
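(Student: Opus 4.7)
The plan is to dispatch (\ref{eq:opt-lqg-simplest}) separately for each information set by first invoking Lemma~\ref{lem:error-proof} to eliminate the estimation-error expectation, and then minimising the resulting convex quadratic in $\pazocal{U}_k$ by a standard first-order argument.

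For the TCP-like protocol, (\ref{eq:TCP-err}) reduces $\EE[\Em_k^{\sf T}\Omega\Em_k \mid \Fc_k]$ to $\tr(\Omega_l \Sigma_{\pazocal{W}})$, which is deterministic and independent of $\pazocal{U}_k(\Fc_k)$; it therefore exits the minimisation as an additive constant. The remaining control-dependent part of (\ref{eq:opt-lqg-simplest}) is
\[
 2 X_k^{\sf T} \Omega_{gp}^{\sf T} \Vsb\, \pazocal{U}_k(\Fc_k) + \pazocal{U}_k(\Fc_k)^{\sf T}\, \Vsb\,(\Omega_g \Vsb + \Psi)\, \pazocal{U}_k(\Fc_k).
\]
I would first verify that the Hessian $2\Vsb(\Omega_g \Vsb + \Psi) = 2(\Vsb \Omega_g \Vsb + \Vsb \Psi)$ is symmetric positive definite: $\Vsb \Omega_g \Vsb$ is PSD because $\Omega_g=\Gamma^{\sf T}\Omega\Gamma\succeq 0$, while $\Vsb \Psi$ is a positive-definite diagonal matrix since $\Vsb,\Psi \in S^{Nm}_{++}$. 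Consequently $\Gm(\Fc_k):=\Omega_g \Vsb + \Psi$ is invertible. Setting the gradient to zero and cancelling the common left factor $\Vsb$ delivers (\ref{eq:opt-law-1}). Back-substituting $\pazocal{U}_k^*$ and using the standard identity $\min_u\{2b^{\sf T}u+u^{\sf T}Au\}=-b^{\sf T}A^{-1}b$ (for symmetric positive definite $A$) with $A=\Vsb\Gm(\Fc_k)$ and $b=\Vsb\Omega_{gp}X_k$ produces the additive correction, which combined with $X_k^{\sf T}(\Qm+\Omega_p)X_k+\tr(\Sigma_{\pazocal{W}}\Omega_l)$ reproduces (\ref{eq:TCP-opt-cost}).

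For the UDP-like protocol, (\ref{eq:UDP-err}) carries the extra quadratic $\pazocal{U}_k(\Gc_k)^{\sf T}\Vsb(\Id\odot\Omega_g)(\Id-\Vsb)\,\pazocal{U}_k(\Gc_k)$ \emph{into} the minimisation. Because every factor in this term is diagonal and $\Id-\Vsb$ is entrywise non-negative, the added contribution is a diagonal PSD matrix that merely augments the Hessian without disturbing its structure. Collecting all $\pazocal{U}_k$-dependent terms casts the objective in exactly the TCP shape, but with Hessian coefficient $\Vsb\bigl[\Psi+(\Id\odot\Omega_g)(\Id-\Vsb)+\Omega_g\Vsb\bigr]=\Vsb\Gm(\Gc_k)$, which remains positive definite; repeating the first-order argument then yields (\ref{eq:opt-law-2}), and back-substitution reproduces (\ref{eq:UDP-opt-cost}) in identical fashion.

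The main technical subtlety is purely bookkeeping: invertibility of the symmetric Hessian $\Vsb\Gm(\Ic_k)$ must be transferred to the asymmetric $\Gm(\Ic_k)$ by cancelling $\Vsb$ on the left, and the optimal-cost correction admits several equivalent expressions---related by the scalar identity $x^{\sf T}Mx=x^{\sf T}M^{\sf T}x$ together with the relation $\Vsb\Gm^{-1}(\Ic_k)=\Gm^{-{\sf T}}(\Ic_k)\Vsb$ (which itself follows from the symmetry of $\Vsb\Gm(\Ic_k)$)---one of which matches the form stated in (\ref{eq:TCP-opt-cost}) and (\ref{eq:UDP-opt-cost}). No further probabilistic work is required at this stage, since Lemma~\ref{lem:error-proof} has already absorbed the randomness of $\Vs$ and $\pazocal{W}_k$.
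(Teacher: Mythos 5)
Your proposal is correct and follows essentially the same route as the paper's own proof: apply Lemma~\ref{lem:error-proof} to each information set, observe that the TCP-like error term leaves the minimisation while the UDP-like term augments the Hessian, verify convexity, take the first-order condition (cancelling the left factor $\Vsb$, which the paper phrases as solving for $\Vsb\neq{\bf 0}$), and back-substitute. Your explicit remarks on the symmetry of $\Vsb\Gm\!\left(\Ic_k\right)$ and the identity $\Vsb\Gm^{\sminus1}\!\left(\Ic_k\right)=\Gm^{\sminus{\sf T}}\!\left(\Ic_k\right)\Vsb$ are a slightly more careful treatment of bookkeeping the paper leaves implicit, but the argument is the same.
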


\begin{proof}
 See Appendix. %
\end{proof}

\begin{remark}
%
In the TCP-like regime the optimal control law,~(\ref{eq:opt-law-1}), only depends on the mean number of packet transmissions,~$\Vsb$, and this term weights how the actuation propagates through the system via the~$\Omega_{g}$ term. %
On the other hand, the optimal control law of the UDP-like regime,~(\ref{eq:opt-law-2}), contains an additional term that weighs the control law with the probability of packet loss,~${\bf I} - \Vsb$. %
%
%
\end{remark}

The optimal control laws presented in~(\ref{eq:opt-law-1}) and~(\ref{eq:opt-law-2}) and the corresponding optimal cost functions~(\ref{eq:TCP-opt-cost}) and~(\ref{eq:UDP-opt-cost}) depend on~$\Vsb$. %
Therefore, the current formulation makes no assumption on the stationarity of the random process governing the channel-loss statistics. %
Specifically, much like how the penalty matrices,~$\Psi$ and~$\Omega$, vary along the time horizon, the mean of packet transmission for each channel may also vary over the time horizon. %
This allows for a wider class of packet loss models to be utilised. %
For example, a sequence of packet losses that form a Markov chain. %
In this scenario the expected value of a packet transmission,~$\Vm_k$, is modelled as~$\Vm_k \sim \Bc \left(\Mm_k\right) $ where~$\Mm_k\in S^{m}_{++}$ is a diagonal matrix in which the i-th diagonal element is~$\mu_{i,k}$ which describes the probability of a packet transmission in the~$i$-th channel at the~$k$-th time step. %
Therefore,~$\EE[\Vm_k]=\Mm_k$ and~$\EE\left[\Vs\right] = \Vsb$ where~$\Vsb$ is the block diagonal matrix where the~$i$-th block is~$\Mm_k$. %
Substitution of these definitions into the above derivation does not break any assumptions made and results in a control law and optimal cost function for a non-stationary sequence of packet losses. %

\section{Cost Difference Analysis}\label{sec:Cost-Difference-Analysis}

The difference in the information sets leads to different optimal control laws, as seen in~(\ref{eq:opt-law-2}), and results in differing costs over the horizon. %
In the following it is shown that the expected optimal control cost incurred by the information set of the UDP-like protocol is strictly greater than the expected optimal control cost incurred by using the information set of the TCP-like protocol. %
%
\begin{theorem}[Main Result]\label{th:analytic-cost-increase}
	Let~$\Mm$ such that~${\bf 0}\prec\Mm\prec {\bf I}$, with information sets given in~(\ref{eq:information}). Then
	\beqn
J^*\left(\! \Gc_k \! \right)-J^*\left(\! \Fc_k \! \right) > 0,\nonumber 
\eeqn
where the optimal cost is as defined in~(\ref{eq:opt-lqg})
%

		\end{theorem}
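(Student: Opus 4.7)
The plan is to compare the two optimal costs through the variational characterization in~(\ref{eq:opt-lqg-simplest}) and to exploit, via Lemma~\ref{lem:error-proof}, the fact that the UDP-like objective differs from the TCP-like one only by an additional non-negative quadratic term in $\pazocal{U}_k$. First I would substitute~(\ref{eq:TCP-err}) and~(\ref{eq:UDP-err}) into~(\ref{eq:opt-lqg-simplest}). The deterministic contribution $X_k^{\sf T}(\Qm+\Omega_{p})X_k+\tr(\Omega_{l}\Sigma_{\pazocal{W}})$ is identical under both protocols and cancels in the difference, reducing the comparison to
\begin{align*}
J^{*}(\Gc_k)-J^{*}(\Fc_k) = \min_{\pazocal{U}_k} C_{G}(\pazocal{U}_k) - \min_{\pazocal{U}_k} C_{F}(\pazocal{U}_k),
\end{align*}
where $C_{F}(\pazocal{U}_k)\eqdef 2\pazocal{U}_k^{\sf T}\Vsb\Omega_{gp}X_k+\pazocal{U}_k^{\sf T}\Vsb(\Omega_{g}\Vsb+\Psi)\pazocal{U}_k$ is the TCP-like cost and $C_{G}(\pazocal{U}_k)=C_{F}(\pazocal{U}_k)+\pazocal{U}_k^{\sf T}\Delta\,\pazocal{U}_k$, with $\Delta\eqdef\Vsb({\bf I}\odot\Omega_{g})({\bf I}-\Vsb)$.

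Second I would show that $\Delta\succ{\bf 0}$. Being the product of three diagonal matrices, $\Delta$ is itself diagonal, and its $i$-th entry equals $\mu_{i}(1-\mu_{i})\,[\Omega_{g}]_{ii}$. The hypothesis ${\bf 0}\prec\Mm\prec{\bf I}$ forces $\mu_{i}(1-\mu_{i})>0$ for every $i$, while the diagonal entries of the Gram matrix $\Omega_{g}=\Gamma^{\sf T}\Omega\Gamma$ are strictly positive provided that no column of $\Gamma$ (equivalently, no column of $\Bm$) is identically zero, which is implicit in the problem set-up since a zero actuator column would make the corresponding input superfluous.

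With $\Delta\succ{\bf 0}$ in hand, the conclusion follows from a sub-optimality argument. Denote the minimizers in~(\ref{eq:control-laws}) by $\pazocal{U}^{*}_k(\Fc_k)$ and $\pazocal{U}^{*}_k(\Gc_k)$, and use that $\pazocal{U}^{*}_k(\Fc_k)$ is the \emph{global} minimizer of the strictly convex quadratic $C_{F}$ to obtain
\begin{align*}
J^{*}(\Gc_k)-J^{*}(\Fc_k) &= \bigl[C_{F}(\pazocal{U}^{*}_k(\Gc_k))-C_{F}(\pazocal{U}^{*}_k(\Fc_k))\bigr] \\
&\quad +\pazocal{U}^{*}_k(\Gc_k)^{\sf T}\Delta\,\pazocal{U}^{*}_k(\Gc_k) \\
&\geq \pazocal{U}^{*}_k(\Gc_k)^{\sf T}\Delta\,\pazocal{U}^{*}_k(\Gc_k).
\end{align*}
The hard part will be the last implication from non-negativity to strict positivity: the lower bound vanishes only when $\pazocal{U}^{*}_k(\Gc_k)={\bf 0}$, which by~(\ref{eq:opt-law-2}) is equivalent to $\Omega_{gp}X_k={\bf 0}$, i.e.\ a degenerate configuration in which no control action is warranted by the state. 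I would discharge this either by invoking a mild standing assumption that $X_k\notin\ker(\Omega_{gp})$, or by observing that the degeneracy holds only on a measure-zero set of initial conditions so that the strict inequality in the theorem is the generic outcome.
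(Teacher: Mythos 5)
Your argument is correct, but it takes a genuinely different route from the paper. The paper works directly with the closed-form optimal costs of Theorem~\ref{th:contol} and compares them through the L\"owner order: it observes $\Gm\!\left(\Gc_k\right)-\Gm\!\left(\Fc_k\right)=\left({\bf I}\odot\Omega_{g}\right)\left({\bf I}-\Vsb\right)\succ 0$, invokes the antitonicity of the matrix inverse to get $\Gm^{\sminus1}\!\left(\Gc_k\right)\prec\Gm^{\sminus1}\!\left(\Fc_k\right)$, and concludes by comparing the resulting quadratic forms in $\Omega_{gp}X_k$. You instead compare the two objectives \emph{before} minimisation, noting via Lemma~\ref{lem:error-proof} that the UDP-like objective is the TCP-like objective plus the extra penalty $\pazocal{U}_k^{\sf T}\Delta\,\pazocal{U}_k$ with $\Delta=\Vsb\left({\bf I}\odot\Omega_{g}\right)\left({\bf I}-\Vsb\right)\succ 0$, and then run a standard sub-optimality sandwich. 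Your route avoids both the explicit cost formulas and the inverse-monotonicity lemma, and it makes the mechanism transparent (the information deficit manifests as an added positive-definite regulariser); the paper's route is what yields the explicit gap expression that Section~\ref{sec:Cost-Difference-Analysis} later differentiates and maximises, so it earns its keep downstream. Two shared caveats are worth noting. First, both arguments need the diagonal of $\Omega_{g}$ to be strictly positive (equivalently, no zero column of $\Bm$); you state this explicitly, the paper does not. Second, strictness genuinely fails when $\Omega_{gp}X_k={\bf 0}$: your lower bound $\pazocal{U}_k^{*}\!\left(\Gc_k\right)^{\sf T}\Delta\,\pazocal{U}_k^{*}\!\left(\Gc_k\right)$ vanishes there, but so does the paper's final strict inequality, since for a quadratic form $x^{\sf T}\Am x< x^{\sf T}\Bm x$ under $\Am\prec\Bm$ one needs $x\neq{\bf 0}$. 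So the degeneracy you flag is not a defect of your approach relative to the paper's; it is an implicit hypothesis of the theorem as stated, and your explicit handling of it (a standing assumption $X_k\notin\ker\left(\Omega_{gp}\right)$, or a measure-zero qualification) is the more careful treatment.
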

	\begin{proof}
	 The optimal control laws for each communication protocol are defined in (\ref{eq:opt-law-1}) and (\ref{eq:opt-law-2}).
Note that~${\Gm_{\Ic_k}\succ 0}$ and that,
	\beqn
	\Gm\!\!\left(\!\Gc_k\!\right)-\Gm\!\!\left(\!\Fc_k\!\right) \nquad \ &=& \nquad \  \left({\bf I} \odot \Omega_{g}\right)\left({\bf I} -\Vsb\right) \succ 0.\label{eq:G-inequality}
	\eeqn
Therefore,%
~\cite[ 10.53]{seber} implies that
	\beqn
	\Gm^{\sminus1}\!\!\left(\!\Gc_k\!\right) \nquad \ &\prec& \nquad \  \Gm^{\sminus1}\!\!\left(\!\Fc_k\!\right),  \nonumber \\
	\Gm^{\sminus1}\!\!\left(\!\Gc_k\!\right)\Vsb \nquad \ &\prec& \nquad \  \Gm^{\sminus1}\!\!\left(\!\Fc_k\!\right)\Vsb, \nonumber \\
		\Cm	-X_k^{\sf T}\Omega_{gp}^{\sf T} \Gm^{\sminus1}\!\!\left(\!\Gc_k\!\right)\Vsb\Omega_{gp} X_k \nquad \  &>& \nquad \  \Cm-X_k^{\sf T}\Omega_{gp}^{\sf T} \Gm^{\sminus1}\!\!\left(\!\Fc_k\!\right)\Vsb\Omega_{gp} X_k, \nonumber 
	\eeqn
where~$\Cm= X_k^{\sf T}\left( \Qm +\Omega_{p} \right)X_k + \tr\left(\Sigma_{\pazocal{W}}\Omega_{l}\right)$. Therefore, for any~${\bf 0}\prec \Mm \prec {\bf I}$ it holds that
\beqn
	J^*\left(\! \Gc_k \! \right)- J^*\left(\! \Fc_k \! \right) \nquad \ &>& \nquad \ 0.
\eeqn
	 This concludes the proof.
\end{proof}
Additional insight can be obtained from Theorem~\ref{th:analytic-cost-increase}.
\begin{corollary}\label{cor:TCP>UDP}
It holds that %
\beqn
\left\| \pazocal{U}_k^* \! \left(\! \Gc_k \! \right)\right\|_{2}\nquad \ &<& \nquad \ \left\| \pazocal{U}_k^* \! \left(\! \Fc_k \! \right)\right\|_{2},
\eeqn
where~$\| \cdot \|_{2}$ denotes the 2-norm.
\end{corollary}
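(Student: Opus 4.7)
The strategy is to leverage the Loewner ordering of the matrices $\Gm(\Fc_k)$ and $\Gm(\Gc_k)$ already established in the proof of Theorem~\ref{th:analytic-cost-increase}. From~(\ref{eq:control-laws}) both controls share the common form $\pazocal{U}_k^*(\Ic_k)=-\Gm^{-1}(\Ic_k)\Omega_{gp}X_k$, so
\[
\|\pazocal{U}_k^*(\Ic_k)\|_2^2 \;=\; X_k^{\sf T}\Omega_{gp}^{\sf T}\,\Gm^{-{\sf T}}(\Ic_k)\Gm^{-1}(\Ic_k)\,\Omega_{gp}X_k .
\]
First I would recall~(\ref{eq:G-inequality}), which shows that the increment $\Gm(\Gc_k)-\Gm(\Fc_k)=(\mathbf{I}\odot\Omega_g)(\mathbf{I}-\Vsb)$ is diagonal and strictly positive definite under $\mathbf{0}\prec\Mm\prec\mathbf{I}$, and therefore $\Gm^{-1}(\Gc_k)\prec\Gm^{-1}(\Fc_k)$ by monotonicity of matrix inversion.

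To convert this matrix ordering into a control-norm inequality I would employ a sub-optimality argument on the strictly convex cost in~(\ref{eq:lqg-sub-err}). The UDP cost functional equals the TCP functional plus the strictly positive quadratic penalty $\pazocal{U}_k^{\sf T}P\,\pazocal{U}_k$, with $P:=\Vsb(\mathbf{I}\odot\Omega_g)(\mathbf{I}-\Vsb)\succ 0$ read off from~(\ref{eq:UDP-err}). Combining the two sub-optimality inequalities obtained by swapping the argmins, namely $f_T(\pazocal{U}_k^*(\Gc_k))\ge f_T(\pazocal{U}_k^*(\Fc_k))$ and $f_U(\pazocal{U}_k^*(\Fc_k))\ge f_U(\pazocal{U}_k^*(\Gc_k))$, yields the $P$-weighted bound $\pazocal{U}_k^*(\Fc_k)^{\sf T}P\,\pazocal{U}_k^*(\Fc_k)\ge\pazocal{U}_k^*(\Gc_k)^{\sf T}P\,\pazocal{U}_k^*(\Gc_k)$, which encodes precisely that UDP is forced to shrink the actuation in the channels most exposed to packet loss.

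The main obstacle I foresee is passing from this $P$-weighted inequality to the Euclidean 2-norm, because the map $M\mapsto M^{\sf T}M$ is not Loewner-monotone and the ordering $\Gm^{-1}(\Gc_k)\prec\Gm^{-1}(\Fc_k)$ therefore does not mechanically deliver $\Gm^{-{\sf T}}(\Gc_k)\Gm^{-1}(\Gc_k)\prec\Gm^{-{\sf T}}(\Fc_k)\Gm^{-1}(\Fc_k)$. I would bridge this gap by exploiting the diagonal structure of the increment: working with the symmetric representative $\Vsb\Gm(\Ic_k)$, which differs across the two protocols only by the diagonal positive-definite matrix $P$, and using simultaneous diagonalisation in the coordinate frame shared by $\Vsb$ and $P$, the matrix comparison reduces to a coordinate-wise scalar inequality that can be verified directly, thereby upgrading the $P$-weighted bound to the stated strict 2-norm comparison.
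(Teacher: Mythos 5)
Your opening steps match the paper's: both start from (\ref{eq:G-inequality}) and the resulting Loewner ordering $\Gm^{\sminus1}\!\left(\Gc_k\right)\prec\Gm^{\sminus1}\!\left(\Fc_k\right)$. Where you diverge is in recognising that this ordering does not by itself deliver the 2-norm comparison, since $M\mapsto M^{\sf T}M$ is not Loewner-monotone; the paper's own proof silently makes exactly this leap, passing from $\Gm^{\sminus1}\!\left(\Gc_k\right)\prec\Gm^{\sminus1}\!\left(\Fc_k\right)$ to the ``vector inequality'' $\Gm^{\sminus1}\!\left(\Gc_k\right)\Omega_{gp}X_k<\Gm^{\sminus1}\!\left(\Fc_k\right)\Omega_{gp}X_k$ and thence to the norms, without justification. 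Your interchange-of-minimisers argument is sound and genuinely different: writing $f_U(\pazocal{U})=f_T(\pazocal{U})+\pazocal{U}^{\sf T}P\pazocal{U}$ with $P=\Vsb\left(\Id\odot\Omega_{g}\right)\left(\Id-\Vsb\right)\succ 0$ read off from (\ref{eq:UDP-err}), and adding the two sub-optimality inequalities, does give $\pazocal{U}_k^*\!\left(\Fc_k\right)^{\sf T}P\,\pazocal{U}_k^*\!\left(\Fc_k\right)\ge\pazocal{U}_k^*\!\left(\Gc_k\right)^{\sf T}P\,\pazocal{U}_k^*\!\left(\Gc_k\right)$, i.e.\ the claim in the $P$-weighted norm.

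However, the final bridge you propose does not close the gap. A $\left\|\cdot\right\|_P$ inequality with a general diagonal $P$, whose entries are the distinct quantities $\mu_i(1-\mu_i)\left[\Omega_{g}\right]_{ii}$, neither implies nor is implied by the corresponding $\left\|\cdot\right\|_2$ inequality. Your suggested repair, simultaneous diagonalisation ``in the coordinate frame shared by $\Vsb$ and $P$,'' is unavailable: $\Vsb$ and $P$ are indeed diagonal in the standard basis, but $\Gm\!\left(\Fc_k\right)=\Omega_{g}\Vsb+\Psi$ and $\Gm\!\left(\Gc_k\right)$ contain the dense matrix $\Omega_{g}=\Gamma^{\sf T}\Omega\Gamma$, so $\Gm^{\sminus1}\!\left(\Fc_k\right)$, $\Gm^{\sminus1}\!\left(\Gc_k\right)$ and $P$ do not commute and admit no common eigenbasis; the comparison never reduces to coordinate-wise scalars. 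Your argument therefore establishes a correct weighted-norm version of the corollary but leaves the stated Euclidean version unproven --- which, to be fair, is precisely the step the paper's own proof also fails to justify.
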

\begin{proof}
	 Starting with (\ref{eq:G-inequality}), it is seen that,
	\beqn
	\Gm\!\!\left(\!\Gc_k\!\right)-\Gm\!\!\left(\!\Fc_k\!\right) \nquad \ &=&\nquad \ \left({\bf I} \odot \Omega_{g}\right)\left({\bf I} -\Vsb\right) \succ 0, \\
	\Gm^{\sminus1}\!\!\left(\!\Gc_k\!\right) \nquad \ &\prec& \nquad \ \Gm^{\sminus1}\!\!\left(\!\Fc_k\!\right),  \\
	\Gm^{\sminus1}\!\!\left(\!\Gc_k\!\right)\Omega_{gp} X_k \nquad \ &<& \nquad \  \Gm^{\sminus1}\!\!\left(\!\Fc_k\!\right)\Omega_{gp} X_k,  \\
	\left\|\Gm^{\sminus1}\!\!\left(\!\Gc_k\!\right)\Omega_{gp} X_k\right\|_{2}\nquad \ &<&\nquad \  \left\|\Gm^{\sminus1}\!\!\left(\!\Fc_k\!\right)\Omega_{gp} X_k\right\|_{2},  \\
\left\| -\pazocal{U}_k^* \! \left(\! \Gc_k \! \right)\right\|_{2} \nquad \  &<& \nquad \  \left\| -\pazocal{U}_k^* \! \left(\! \Fc_k \! \right)\right\|_{2},  \\
\left\| \pazocal{U}_k^* \! \left(\! \Gc_k \! \right)\right\|_{2} \nquad \ &<& \nquad \ \left\| \pazocal{U}_k^* \! \left(\! \Fc_k \! \right)\right\|_{2}.
	\eeqn
%
%
%
This concludes the prof. %
\end{proof}

%
The TCP-like protocol achieves a lower quadratic cost by using larger control signals to drive the states to zero quicker than the UDP-like protocol. %
%
%
This difference is a result of the larger information set that the TCP-like protocol has access to.	
%
%

%
%
The following theorem shows that the cost function is monotonically decreasing functions in~$\Mm$.
\begin{theorem}\label{th:monotonic-cost-increase}
	Let~${\Mm_1 \in S^{m}_{++}}$ and~${\Mm_2 \in S^{M}_{++}}$ be diagonal matrices. If~${\Mm_{2} \succ\Mm_{1}}$ then %
	\beqn
		J^*_{\Delta\Mm} = J^*_{\Mm_1}\left(\! \Ic_k \! \right) - J^*_{\Mm_2}\left(\! \Ic_k \! \right) >0, 
	\eeqn
	where~${J^*_{\Mm_1}\left(\! \Ic_k \! \right)}$ is the optimal expected cost obtained with the value of~$\Mm_i$, where~$\Mm_i$ as the mean of the channel transmission variable,~$\Vm_k$. 
\end{theorem}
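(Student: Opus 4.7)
The plan is to invoke Theorem~\ref{th:contol} to cancel the $\Mm$-independent summands of $J^*(\Ic_k)$ and thereby reduce Theorem~\ref{th:monotonic-cost-increase} to a monotonicity statement about the single $\Mm$-dependent term. Substituting either~(\ref{eq:TCP-opt-cost}) or~(\ref{eq:UDP-opt-cost}) shows that $X_k^{\sf T}(\Qm+\Omega_p)X_k+\tr(\Sigma_{\pazocal{W}}\Omega_l)$ is common to both $J^*_{\Mm_1}$ and $J^*_{\Mm_2}$, leaving
\[
J^*_{\Delta\Mm}=X_k^{\sf T}\Omega_{gp}^{\sf T}\!\left[\,\Gm^{\sminus1}(\Ic_k;\Vsb_2)\Vsb_2-\Gm^{\sminus1}(\Ic_k;\Vsb_1)\Vsb_1\,\right]\!\Omega_{gp} X_k,
\]
where $\Vsb_i$ denotes the block-diagonal lift of $\Mm_i$. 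The theorem then reduces to showing that, viewed as a bilinear form on the range of $\Omega_{gp}$, the matrix $\Vsb\,\Gm^{\sminus1}(\Ic_k;\Vsb)$ is strictly Loewner-increasing in $\Vsb$ over the diagonal positive-definite cone.

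The key algebraic step exploits the fact that $\Vsb$, $\Psi$, and $D\eqdef \Id\odot\Omega_g$ are all diagonal and therefore pairwise commuting. Factoring $\Vsb$ out of the right-hand side of $\Gm(\Fc_k)$ and $\Gm(\Gc_k)$ gives, respectively,
\[
\Gm(\Fc_k)=\left(\Omega_g+\Psi\Vsb^{\sminus1}\right)\Vsb,\qquad \Gm(\Gc_k)=\left((\Omega_g-D)+(\Psi+D)\Vsb^{\sminus1}\right)\Vsb,
\]
from which one obtains the manifestly symmetric representations
\[
\Vsb\,\Gm^{\sminus1}(\Fc_k)=\left(\Omega_g+\Psi\Vsb^{\sminus1}\right)^{\sminus1},\qquad \Vsb\,\Gm^{\sminus1}(\Gc_k)=\left((\Omega_g-D)+(\Psi+D)\Vsb^{\sminus1}\right)^{\sminus1}.
\]
In both cases the argument of the outer inverse is the sum of a symmetric matrix independent of $\Vsb$ and a diagonal term of the form $K\Vsb^{\sminus1}$ with $K$ diagonal positive definite; since $K$ and $\Vsb^{\sminus1}$ commute, $\Vsb_1\prec\Vsb_2$ implies $K\Vsb_1^{\sminus1}\succ K\Vsb_2^{\sminus1}$ strictly, so the entire argument of the inverse strictly decreases in Loewner order. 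Invoking the strict Loewner anti-monotonicity of the matrix inverse~\cite[10.53]{seber}, already used in the proof of Theorem~\ref{th:analytic-cost-increase}, delivers $\Vsb_1\,\Gm^{\sminus1}(\Ic_k;\Vsb_1)\prec \Vsb_2\,\Gm^{\sminus1}(\Ic_k;\Vsb_2)$, and sandwiching by $\Omega_{gp}$ and $X_k$ yields $J^*_{\Delta\Mm}>0$.

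The main obstacle will be justifying symmetry and positive definiteness of the factored inverse throughout the range $\mathbf{0}\prec\Mm\prec\Id$. This is nontrivial in the UDP case because $(\Omega_g-D)$ has zero diagonal and is generally indefinite, so the dominance of the diagonal summand $(\Psi+D)\Vsb^{\sminus1}$ over the indefinite part has to be argued carefully — most cleanly by invoking the already-established positive definiteness of $\Gm(\Gc_k)$, which propagates through the factorization to its symmetric factor. A subsidiary point is that the strict inequality implicitly requires the non-degeneracy $\Omega_{gp} X_k\neq\mathbf{0}$; in the degenerate case the two costs coincide exactly with the $\Mm$-independent part and the inequality becomes an equality.
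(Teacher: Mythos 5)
Your argument is correct and shares the paper's skeleton --- cancel the $\Mm$-independent constant $\Cm$ via Theorem~\ref{th:contol} and show the remaining quadratic-form term is strictly increasing in $\Vsb$ --- but you execute the key step differently, and arguably better. The paper factors the difference explicitly as $\left(\Vsb_2\Omega_{g}+\Psi\right)^{\sminus1}\Vsb_2\left(\Vsb_1^{\sminus1}-\Vsb_2^{\sminus1}\right)\Psi\left(\Vsb_1\Omega_{g}+\Psi\right)^{\sminus1}\Vsb_1$ and asserts positivity because the middle diagonal factor is positive definite; as written that is not a complete justification, since a product of non-commuting positive definite matrices need not induce a positive quadratic form. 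You instead symmetrise first, noting $\Vsb\,\Gm^{\sminus1}\!\left(\Fc_k\right)=\left(\Omega_{g}+\Psi\Vsb^{\sminus1}\right)^{\sminus1}$ (and the analogous UDP expression), so the whole term is the inverse of a symmetric matrix affine in $\Vsb^{\sminus1}$, and then invoke Loewner anti-monotonicity of the inverse. Both routes rest on the same cancellation --- $\Omega_{g}$ (resp.\ $\Omega_{g}-\Id\odot\Omega_{g}$) drops out of $\Vsb^{\sminus1}\Gm\!\left(\Ic_k\right)$ --- and indeed the paper's triple product telescopes to exactly the difference of symmetric inverses that you bound, so your version supplies the justification the paper's final inequality is missing. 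Two further points in your favour: the positive definiteness of $\left(\Omega_{g}-\Id\odot\Omega_{g}\right)+\left(\Psi+\Id\odot\Omega_{g}\right)\Vsb^{\sminus1}$ does follow from $\Gm\!\left(\Gc_k\right)\succ 0$ as you suggest (or directly, since it equals $\Omega_{g}+\Psi\Vsb^{\sminus1}+\left(\Id\odot\Omega_{g}\right)\left(\Vsb^{\sminus1}-\Id\right)$ with every summand non-negative definite and $\Psi\Vsb^{\sminus1}\succ 0$); and your remark that strictness fails when $\Omega_{gp}X_k=\mathbf{0}$ identifies a degenerate case that the paper's unconditional strict inequality silently excludes.
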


\begin{proof}
The proof is constructed for the TCP-like protocol. %
However, with the substitutions of~$\Omega_{g}$ for~$\Omega_{h}$ and~$\Psi$ for~$\left({\bf I}\odot \Omega_{g}\right) +\psi$ the corresponding UDP-like proof is identical. %
For a given~$\Mm_1$ and~$\Mm_2$ the cost difference between the optimal expected costs calculated for each~$\Mm_i$ respectively is %
\beqn
	J^*_{\Mm_1}\left(\! \Ic_k \! \right)\nquad  &=&\nquad \Cm	-X_k^{\sf T}\Omega_{gp}^{\sf T}   \left(
	\Vsb_1\Omega_{g} +\Psi\right)^{\sminus1}    \Vsb_1\Omega_{gp} X_k, \\
	J^*_{\Mm_2}\left(\! \Ic_k \! \right) \nquad &=& \nquad \Cm	-X_k^{\sf T}\Omega_{gp}^{\sf T} \left(
	\Vsb_2\Omega_{g} +\Psi\right)^{\sminus1}  \Vsb_2\Omega_{gp} X_k,  
\eeqn
where~$\Vsb_1$ and~$\Vsb_2$ are the diagonal matrices constructed from the matrices~$\Mm_1$ and~$\Mm_2$, such that~$\Vsb_{i} = {\bf I}_{N} \otimes \Mm_{i}$. %
Additionally, the constant~$\Cm $ is defined as~$\Cm= X_k^{\sf T}\left( \Qm +\Omega_{p} \right)X_k + \tr\left(\Sigma_{\pazocal{W}}\Omega_{l}\right)$. %
Consequently,~$\Vsb_2\succ\Vsb_1$ due to the assumption~$\Mm_2\succ\Mm_1$. %
Therefore, the cost difference between the two optimal expected costs is %
\beqn
	J^*_{\Delta\Mm}\nquad \  &=& \nquad \ J^*_{\Mm_1}\left(\! \Ic_k \! \right) - J^*_{\Mm_2}\left(\! \Ic_k \! \right) \\
	&=& \nquad  \Cm -X_k^{\sf T}\Omega_{gp}^{\sf T}   \left(
	\Vsb_1\Omega_{g} +\Psi\right)^{\sminus1}    \Vsb_1\Omega_{gp} X_k \nonumber \\
	&& - \left( \Cm - X_k^{\sf T}\Omega_{gp}^{\sf T} \left(
	\Vsb_2\Omega_{g} +\Psi\right)^{\sminus1}  \Vsb_2\Omega_{gp} X_k \right), \nonumber \\
	&=&\nquad  X_k^{\sf T}\! \Omega_{gp}^{\sf T} \!\! \left[\!\! \left( \!
	\Vsb_2\Omega_{g}\! +\! \Psi \! \right)^{\!\! \sminus1}  \!\!  \Vsb_2 \! - \! \left(\!
	\Vsb_1\Omega_{g} \! + \! \Psi \! \right)^{\!\! \sminus1} \!\! \Vsb_1 \!\! \right] \!\! \Omega_{gp} X_k , \nonumber \\
	&=&\nquad  X_k^{\sf T}\Omega_{gp}^{\sf T} \left(\Vsb_2\Omega_{g} +\Psi\right)^{\sminus1}\Vsb_2 \left[ \Vsb_1^{\sminus1}\left(
\Vsb_1\Omega_{g} +\Psi\right)\right. \nonumber \\
	&& \left. -  \Vsb_2^{\sminus1}\left(\Vsb_2\Omega_{g} +\Psi\right)   \right]\left(\Vsb_1\Omega_{g} +\Psi\right)^{\sminus1}  \Vsb_1 \Omega_{gp} X_k , \nonumber \\
	&=&\nquad  X_k^{\sf T}\Omega_{gp}^{\sf T} \left(\Vsb_2\Omega_{g} +\Psi\right)^{\sminus1}\Vsb_2 \nonumber \\
	&&  \times \left(\Vsb_1^{\sminus1} -  \Vsb_2^{\sminus1}\right)\Psi\left(\Vsb_1\Omega_{g} +\Psi\right)^{\sminus1}  \Vsb_1 \Omega_{gp} X_k . \label{eq:monotonic-cost-pos-def}
\eeqn
Only the term~$\Vsb_1^{\sminus1} -  \Vsb_2^{\sminus1}$ within~(\ref{eq:monotonic-cost-pos-def}) determines the positivity of the expected cost difference. %
The term is positive if~$\Mm_2\succ\Mm_1$, as is assumed above. %
Therefore,~${\Vsb_2\succ\Vsb_1}$ and~${\Vsb_1^{\sminus1} -  \Vsb_2^{\sminus1}\succ 0}$ and the expected cost difference is strictly positive. %
This concludes the proof. %
\end{proof}

\begin{corollary}\label{cor:com-cost}
	Theorem~\ref{th:analytic-cost-increase} combined with Theorem~\ref{th:monotonic-cost-increase} implies that with both protocols operating at a fixed cost value, the TCP-like protocol communicates with a larger packet loss rate. %
	\beqn
	J^*_{\Mm_1}\left(\! \Gc_k \! \right) = J^*_{\Mm_2}\left(\! \Fc_k \! \right),
	\eeqn
	where~$\Mm_{1} \succ \Mm_{2}$. %
\end{corollary}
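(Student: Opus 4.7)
The plan is to deduce Corollary~\ref{cor:com-cost} by a contrapositive/contradiction argument that couples the strict inequality of Theorem~\ref{th:analytic-cost-increase} with the strict monotonicity of Theorem~\ref{th:monotonic-cost-increase}. Together these two results pin down in which direction the UDP-like channel mean must move relative to the TCP-like channel mean for the two optimal costs to coincide.

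First I would fix the TCP-like channel mean $\Mm_2$ and invoke Theorem~\ref{th:analytic-cost-increase} at the common value $\Mm=\Mm_2$, which yields $J^*_{\Mm_2}(\Gc_k) > J^*_{\Mm_2}(\Fc_k)$. Combined with the hypothesised equality $J^*_{\Mm_1}(\Gc_k)=J^*_{\Mm_2}(\Fc_k)$, this gives $J^*_{\Mm_1}(\Gc_k) < J^*_{\Mm_2}(\Gc_k)$, i.e. the UDP-like cost evaluated at $\Mm_1$ must lie strictly below the UDP-like cost evaluated at $\Mm_2$.

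Next I would apply Theorem~\ref{th:monotonic-cost-increase} specialised to the UDP-like protocol: since the map $\Mm\mapsto J^*_{\Mm}(\Gc_k)$ is strictly decreasing in the Löwner order on diagonal matrices, the inequality just obtained rules out $\Mm_1\preceq\Mm_2$. Under the assumption that $\Mm_1$ and $\Mm_2$ are comparable this forces $\Mm_1\succ\Mm_2$, which is precisely the statement of the corollary. Phrased positively, any improvement in the UDP-like channel statistics (larger $\Mm$) needed to close the cost gap with a TCP-like system at channel mean $\Mm_2$ must be strict, so the TCP-like protocol tolerates a strictly higher packet-loss rate at the same operating cost.

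The main obstacle is minor and largely notational: I would need to verify that the contrapositive of Theorem~\ref{th:monotonic-cost-increase} transfers cleanly from the pair $(\Mm_1,\Mm_2)$ of strictly ordered diagonal matrices used there to the present comparison, and, if one wants to establish \emph{existence} of an $\Mm_1$ achieving the equality rather than merely its sign, to argue continuity of $J^*_{\Mm}(\Gc_k)$ in $\Mm$ together with an appropriate limit as $\Mm\to\mathbf{I}$ so that an intermediate-value argument supplies the matching channel mean.
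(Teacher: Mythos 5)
Your proposal is correct and uses exactly the same two ingredients as the paper (Theorem~\ref{th:analytic-cost-increase} and Theorem~\ref{th:monotonic-cost-increase}), but it runs the logic in the opposite direction. The paper starts from the UDP-like system at $\Mm_1$, writes $J^*_{\Mm_1}\left(\Gc_k\right) = J^*_{\Mm_1}\left(\Fc_k\right) + \epsilon$ with $\epsilon>0$ by Theorem~\ref{th:analytic-cost-increase}, and then asserts via Theorem~\ref{th:monotonic-cost-increase} that some $\Mm_2 \prec \Mm_1$ satisfies $J^*_{\Mm_2}\left(\Fc_k\right) = J^*_{\Mm_1}\left(\Fc_k\right) + \epsilon$; that is an existence claim which Theorem~\ref{th:monotonic-cost-increase} alone does not deliver, since monotonicity does not guarantee that every intermediate cost value is attained --- precisely the continuity/intermediate-value gap you flag at the end. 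You instead take the equality $J^*_{\Mm_1}\left(\Gc_k\right) = J^*_{\Mm_2}\left(\Fc_k\right)$ as the hypothesis, apply Theorem~\ref{th:analytic-cost-increase} at the common mean $\Mm_2$ to obtain $J^*_{\Mm_1}\left(\Gc_k\right) < J^*_{\Mm_2}\left(\Gc_k\right)$, and use the UDP-side monotonicity of Theorem~\ref{th:monotonic-cost-increase} in contrapositive form to exclude $\Mm_1 \preceq \Mm_2$. This is the more faithful reading of the corollary as stated (fixed cost as premise, ordering as conclusion), and it correctly surfaces the second issue the paper elides: diagonal matrices need not be comparable in the L\"owner order, so concluding $\Mm_1 \succ \Mm_2$ rather than merely $\Mm_1 \not\preceq \Mm_2$ requires either the scalar channel of Section~\ref{sec:Cost-Difference-Analysis} or an explicit comparability assumption. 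Neither route is stronger in substance, but yours makes explicit the two hypotheses (comparability and continuity) that the paper's three-line proof silently assumes.
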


\begin{proof}
	From Theorem~\ref{th:analytic-cost-increase} %
	\beqn
	J^*_{\Mm_1} \left(\Gc_k\right) = J^*_{\Mm_1}\left(\Fc_k\right) + \epsilon , \nonumber 
	\eeqn
	where $\epsilon \in \R^{+} $. %
	From Theorem~\ref{th:monotonic-cost-increase} it is known that
	\beqn
	J^*_{\Mm_1} \left(\Fc_k\right) + \epsilon = J^*_{\Mm_2}\left(\Fc_k\right)  .\nonumber 
	\eeqn
	Therefore,
	\beqn
	J^*_{\Mm_1}\left(\! \Gc_k \! \right) = J^*_{\Mm_2}\left(\! \Fc_k \! \right),\nonumber 
	\eeqn
	This concludes the proof. %
\end{proof}

\begin{remark}
	Theorem~\ref{th:monotonic-cost-increase} and Corollary~\ref{cor:com-cost} also apply to a non-stationary communication channelwith a slight adjustment of the conditions. %
	Both are true for a non-stationary channel when the stronger condition~$\Vsb_2\succ\Vsb_1$ holds, or more precisely,~$\Mm_{2\ k}\succ \Mm_{1\ k}$ for all~$k$. %
	%
\end{remark}

As is shown in Theorem~\ref{th:analytic-cost-increase}, the cost difference between the UDP-like and the TCP-like protocol is strictly positive for a channel without deterministic packet transmissions. %
The cost difference is zero only in the cases of no communication,~$\Mm={\bf 0}$, or perfect communication,~$\Mm = {\bf I}$. %

\subsection{Scalar Communication Channel}

The maximum difference in the expected cost and the maximising value of the expected packet transmission variable is characterised when the expected cost difference, as established in Theorem~\ref{th:analytic-cost-increase}, is simplified to the scalar case i.e.~$\Vsb\in\vR$. %
In doing so, the channel is simplified to a single channel that all actuators share. %
Additionally, the following results do not apply to a non-stationary communication channel. %

Assuming the same plant dynamics as~(\ref{eq:plant}), the cost difference between the two protocols as a function of~$\Vsb$ is given by
\beqn\label{eq:cost-diff-def}
J^*_{\Delta}\left(\Vsb\right) \nquad \ &\eqdef&  \nquad \ J^*\left(\! \Gc_k \! \right) - J^*\left(\! \Fc_k \! \right).
\eeqn
Note that Theorem~\ref{th:analytic-cost-increase} states that~(\ref{eq:cost-diff-def}) is positive, and therefore, the cost difference is
\beqn
J^*_{\Delta}\left(\Vsb\right)\nquad \ &=&\nquad \ \Cm	-X_k^{\sf T}\Omega_{gp}^{\sf T}\Vsb\Gm^{\sminus1}\!\!\left(\!\Gc_k\!\right)\Omega_{gp} X_k \nonumber \\
&& \qquad\qquad - \left( \! \Cm-X_k^{\sf T}\Omega_{gp}^{\sf T}\Vsb\Gm^{\sminus1}\!\!\left(\!\Fc_k\!\right)\Omega_{gp} X_k \! \right) \ \nonumber \\
%
%
%
&=&  \ \nquad X_k^{\sf T}\Omega_{gp}^{\sf T} \! \Vsb \! \Gm^{\sminus1}\!\!\left(\!\Gc_k\!\right) \!\! \left(\! 1-\Vsb \! \right) \!\! \left({\bf I} \odot \Omega_{g} \! \right) \!\! \Gm^{\sminus1}\!\!\left(\!\Fc_k\!\right) \! \Omega_{gp} X_k  \ \nonumber \\
%
%
&=& \ \nquad \Vsb\left(1-\Vsb\right) \tr \left(\Gm^{\sminus1}\!\!\left(\!\Gc_k\!\right)\left({\bf I} \odot \Omega_{g}\right)\Gm^{\sminus1}\!\!\left(\!\Fc_k\!\right)\Lm \right), \label{eq:cost-diff-eq}
\eeqn
where~$\Lm= \Omega_{gp} X_k X_k^{\sf T}\Omega_{gp}^{\sf T}$. %
From~(\ref{eq:cost-diff-eq}) it is seen that the cost difference between the protocols depends on a scaling of the variance of the packet transmission variable,~$\Vsb\left(1- \Vsb\right)$, over the prediction horizon,~$N$. %
Intuitively, this means that for a channel with a high variance the cost difference is larger. %
The TCP-like protocol has access to more information and is better able to reduce the uncertainty in the state caused by~$\Vs$ than the UDP-like protocol, and therefore, has a smaller cost. %
However, the difference is a non-linear function of~$\Vsb$ owing to the dependence of~$\Gm^{\sminus1}\!\!\left(\!\Fc_k\!\right)$ and~$\Gm^{\sminus1}\!\!\left(\!\Gc_k\!\right)$ on~$\Vsb$. %
At this point we characterise the maximum cost difference as a function of~$\Vsb$. %
This maximum cost difference corresponds to the greatest cost difference incurred by the operator choosing to communicate using a UDP-like communication protocol instead of a TCP-like protocol. %
Lemma~\ref{lem:cost-diff-deriv} presents this result.
\begin{lemma}\label{lem:cost-diff-deriv}
	The derivative of the cost difference in~(\ref{eq:cost-diff-eq}) is
	\beqn
	{{\partial}\over{\partial \! \Vsb }}	J_\Delta^* \! \left( \!\Vsb \! \right)\nquad 
	&=&\nquad \  X_k^{\sf T}\Omega_{gp}^{\sf T} \left(
	\Gm^{\sminus1}\!\!\left(\!\Gc_k\!\right)\left(\vphantom{\Big|} \left(1-2\Vsb \right) \Omega_{d} \right.\right.\nonumber\\
	&&\nqquad\nqquad\nqquad \ \left.\left.  \sminus  \! \Vsb\!\! \left( \!\! 1 \sminus \! \Vsb \! \right)\!\!\! \left[\! \Omega_{h} \! \Gm^{\sminus1}\!\!\left(\!\Gc_k\!\right) \! \Omega_{d} \! + \! \Omega_{d} \Gm^{\sminus1}\!\!\left(\!\Fc_k\!\right) \! \Omega_{g} \!\! \right] \!\!  \right) \!\! \Gm^{\sminus1}\!\!\left(\!\Fc_k\!\right) \!\! \right) \!\! \Omega_{gp} X_k \! , \ \ \quad \label{eq:cost-diff-deriv}
	\eeqn
	where~$\Omega_{d}=\left({\bf I} \odot \Omega_{g}\right)$ and~$\Vsb \in \vR$.
\end{lemma}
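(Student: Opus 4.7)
The plan is to differentiate the closed-form expression for $J^*_{\Delta}(\Vsb)$ given in~(\ref{eq:cost-diff-eq}) with respect to the scalar $\Vsb$, applying the product rule together with the standard matrix identity ${\partial M^{-1}}/{\partial\Vsb} = -M^{-1}(\partial M/\partial\Vsb)M^{-1}$. Since we are in the scalar-channel regime, $\Vsb\in\vR$ and all derivatives reduce to ordinary scalar derivatives of matrix-valued functions, so no vectorisation machinery is needed.

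First I would rewrite the cost difference in its pre-trace quadratic form,
\beqn
J^*_{\Delta}(\Vsb) \nquad &=& \nquad X_k^{\sf T}\Omega_{gp}^{\sf T}\,\Vsb\,\Gm^{\sminus1}\!\!\left(\!\Gc_k\!\right)(1-\Vsb)\Omega_{d}\,\Gm^{\sminus1}\!\!\left(\!\Fc_k\!\right)\Omega_{gp}X_k, \nonumber
\eeqn
which makes visible the four $\Vsb$-dependent factors: the scalars $\Vsb$ and $(1-\Vsb)$, and the two inverse matrices $\Gm^{\sminus1}(\Gc_k)$ and $\Gm^{\sminus1}(\Fc_k)$. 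I would then read off the base matrices from~(\ref{eq:opt-law-1}) and~(\ref{eq:opt-law-2}):
\beqn
\Gm(\Fc_k) \nquad &=& \nquad \Omega_{g}\Vsb + \Psi, \nonumber \\
\Gm(\Gc_k) \nquad &=& \nquad \Psi + \Omega_{d}({\bf I}-\Vsb) + \Omega_{g}\Vsb, \nonumber
\eeqn
so that $\partial\Gm(\Fc_k)/\partial\Vsb = \Omega_{g}$ and $\partial\Gm(\Gc_k)/\partial\Vsb = \Omega_{g}-\Omega_{d} =: \Omega_{h}$, which pins down the notation $\Omega_{h}$ appearing in~(\ref{eq:cost-diff-deriv}).

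Next I would apply the product rule to the four $\Vsb$-dependent factors in sequence. Differentiating only the scalar product $\Vsb(1-\Vsb)$ contributes the first term $(1-2\Vsb)\Omega_{d}$ sandwiched between $\Gm^{\sminus1}(\Gc_k)$ and $\Gm^{\sminus1}(\Fc_k)$. Differentiating $\Gm^{\sminus1}(\Gc_k)$ via the inverse-derivative identity yields a contribution of $-\Vsb(1-\Vsb)\,\Gm^{\sminus1}(\Gc_k)\,\Omega_{h}\,\Gm^{\sminus1}(\Gc_k)\,\Omega_{d}\,\Gm^{\sminus1}(\Fc_k)$, and differentiating $\Gm^{\sminus1}(\Fc_k)$ yields $-\Vsb(1-\Vsb)\,\Gm^{\sminus1}(\Gc_k)\,\Omega_{d}\,\Gm^{\sminus1}(\Fc_k)\,\Omega_{g}\,\Gm^{\sminus1}(\Fc_k)$. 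Factoring out the common outer sandwich $\Gm^{\sminus1}(\Gc_k)\,\cdot\,\Gm^{\sminus1}(\Fc_k)$ and collecting the three pieces reproduces exactly the expression~(\ref{eq:cost-diff-deriv}), with the square-bracket term being $\Omega_{h}\Gm^{\sminus1}(\Gc_k)\Omega_{d}+\Omega_{d}\Gm^{\sminus1}(\Fc_k)\Omega_{g}$.

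There is no real conceptual obstacle; the difficulty is purely bookkeeping. The step that requires the most care is preserving the non-commutativity of $\Omega_{g}$, $\Omega_{d}$, $\Omega_{h}$ and the two inverses when applying the chain rule, and in particular ensuring that the $\Omega_{d}$ coming from $\partial[\Vsb(1-\Vsb)\Omega_{d}]/\partial\Vsb$ remains between the two inverse factors so that the resulting expression factors cleanly as in~(\ref{eq:cost-diff-deriv}).
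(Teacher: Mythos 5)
Your proposal is correct and follows essentially the same route as the paper: identify $\Gm(\Fc_k)$ and $\Gm(\Gc_k)$ as affine functions of the scalar $\Vsb$ with derivatives $\Omega_{g}$ and $\Omega_{h}=\Omega_{g}-\Omega_{d}$, then apply the product rule together with the identity $\partial M^{\sminus1}/\partial\Vsb=-M^{\sminus1}(\partial M/\partial\Vsb)M^{\sminus1}$ to the three $\Vsb$-dependent factors. The only cosmetic difference is that the paper carries the computation through the trace form $\Vsb(1-\Vsb)\tr(\Gm_{\Gm}(\Vsb)\Omega_{d}\Gm_{\Fm}(\Vsb)\Lm)$ with $\Lm=\Omega_{gp}X_kX_k^{\sf T}\Omega_{gp}^{\sf T}$, whereas you differentiate the equivalent quadratic form directly.
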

\proof See Appendix.

Finding the critical points of the cost difference (\ref{eq:cost-diff-deriv}) is non-trivial for this function due to the outer products. %
In order to find the stationary points of (\ref{eq:cost-diff-deriv}) it is required that the maximum eigenvalue of the matrix inside the quadratic tends to~$0$. %
The maximum eigenvalue of a matrix can be written as~\cite{HornJohnMatrix}
\beqn
\lambda_{\max}\left(\Cm\right) = \max_{\|x\|\neq 0 } {{\xv^{\sf T}\Cm \xv}\over{\xv^{\sf T}\xv}},\label{eq:Rayleigh-Ritz}
\eeqn
where~$\xv \in \R^n$ is a column vector,~$\Cm$ is a square matrix of appropriate dimension, and~$\lambda_{\max}\left(\Cm\right)$ is the maximum eigenvalue of~$\Cm$. %
%
%
Any matrix for which all eigenvalues are equal to~$0$ also satisfies that the determinant is zero. %
However, not all matrices with a~$0$ determinant have a maximum eigenvalue equal to~$0$. %
%
%
Solving for a zero determinant of~(\ref{eq:cost-diff-deriv}) results in a finite number of values for~$\Vsb$, at which point the condition~(\ref{eq:Rayleigh-Ritz}) reveals the critical points. %
This leads to the next theorem. 
\begin{lemma}\label{lem:Nu-Sols}  
		The equation
		\beqn
	&&	\det\left(\vphantom{\Big|}\Gm_{\Gm}\!\!\left(\Vsb\right)\left[ \left(1-2\Vsb \right) \Omega_{d} \right.\right. \nonumber \\
	&& \nqquad\nqquad \left.\left.
		-  \! \Vsb\!\! \left( \!\! 1 - \! \Vsb \! \right)\!\!\! \left[\! \Omega_{h} \! \Gm_{\Gm}\!\!\left(\Vsb\right) \Omega_{d} + \Omega_{d} \Gm_{\Fm}\!\!\left(\Vsb\right) \Omega_{g} \right] \right]\Gm_{\Fm}\!\!\left(\Vsb\right)\vphantom{\Big |}\!\!\right)\! = \! 0 \label{eq:det-0}
		\eeqn
		has~$2Nm$ solutions, given by,
		\begin{subequations}\label{eq:det-sol}
		\beqn
		\Vsb^D_{2i\sminus1} \nquad \  &=& \nquad \  {{1}\over{1 + \sqrt{1+\lambda_i}}},\\ 
		\Vsb^D_{2i} \nquad \  &=& \nquad \  {{1}\over{1 - \sqrt{1+\lambda_i}}},
		\eeqn	
		\end{subequations}
		where~$\Vsb^D_i~$ correspondences to the~$i$-th solution of~(\ref{eq:det-0}) and~$\lambda_i$ is the~$i$-th eigenvalue of the matrix,
		\beqn
		\left(\Omega_{g}\Omega_{d}^{\sminus1}\left(\Omega_{g} + \Psi\right) + \Psi \Omega_{d}^{\sminus1}\Omega_{h} \right) \left(\Omega_{g} + \Psi\right)^{\sminus1} \Omega_{d}\Psi^{\sminus1}.
		\eeqn
\end{lemma}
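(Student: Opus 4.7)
The plan is to strip off the outer positive definite factors, clear the internal inverses, and reduce the resulting polynomial determinant equation to a standard eigenvalue problem in an auxiliary matrix $\Lambda$, whose eigenvalues $\lambda_i$ are then linked to the sought values of $\Vsb$ through a scalar quadratic.

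First, I would observe that both $\Gm_\Gm(\Vsb)$ and $\Gm_\Fm(\Vsb)$ are positive definite on $\Vsb\in(0,1)$. From Theorem~\ref{th:contol}, $\Gm(\Fc_k)=\Omega_{g}\Vsb + \Psi$ and $\Gm(\Gc_k)=(1-\Vsb)\Omega_{d} + \Omega_{g}\Vsb + \Psi$, both of which are positive definite since $\Psi,\Omega_{g}\succ 0$ and $\Omega_{d}\succeq 0$. Hence their inverses $\Gm_\Gm(\Vsb),\Gm_\Fm(\Vsb)$ have strictly positive determinants, and by multiplicativity the equation~(\ref{eq:det-0}) reduces to the vanishing of the determinant of the middle factor:
$$\det\!\left((1-2\Vsb)\Omega_{d} - \Vsb(1-\Vsb)\bigl[\Omega_{h}\Gm_\Gm(\Vsb)\Omega_{d} + \Omega_{d}\Gm_\Fm(\Vsb)\Omega_{g}\bigr]\right) = 0.$$

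Second, I would pre-multiply this middle factor by $\Gm_\Gm^{\sminus1}(\Vsb)$ and post-multiply by $\Gm_\Fm^{\sminus1}(\Vsb)$, both of which are nonsingular and therefore do not introduce or remove any roots. Expanding the resulting polynomial matrix expression and using the affine identity $\Gm_\Gm^{\sminus1}(\Vsb)-\Gm_\Fm^{\sminus1}(\Vsb) = (1-\Vsb)\Omega_{d}$, the apparently cubic terms in $\Vsb$ cancel and the equation collapses to the matrix pencil
$$\det\!\left(\Vsb^{2}\,\Lambda - (1-2\Vsb)\,\mathbf{I}\right) = 0,$$
where $\Lambda = (\Omega_{g}\Omega_{d}^{\sminus1}(\Omega_{g}+\Psi) + \Psi\Omega_{d}^{\sminus1}\Omega_{h})(\Omega_{g}+\Psi)^{\sminus1}\Omega_{d}\Psi^{\sminus1}$ is the $Nm\times Nm$ matrix in the lemma statement. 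Each eigenvalue $\lambda_i$ of $\Lambda$ therefore yields the scalar condition $\lambda_i\Vsb^{2} + 2\Vsb - 1 = 0$.

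Finally, for each $\lambda_i$ the quadratic formula gives $\Vsb = (-1\pm\sqrt{1+\lambda_i})/\lambda_i$. Rationalizing by multiplying numerator and denominator by $1\mp\sqrt{1+\lambda_i}$ yields exactly
$$\Vsb^{D}_{2i\sminus1} = \frac{1}{1+\sqrt{1+\lambda_i}}, \qquad \Vsb^{D}_{2i} = \frac{1}{1-\sqrt{1+\lambda_i}},$$
producing two roots per eigenvalue and, since $\Lambda\in\R^{Nm\times Nm}$ has $Nm$ eigenvalues, a total of $2Nm$ solutions. The main obstacle will be the second step: carrying out the algebraic reduction to verify that the cubic-looking dependence on $\Vsb$ genuinely collapses to the claimed quadratic pencil with precisely the matrix $\Lambda$ above. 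This requires careful handling of the non-commuting products $\Gm_\Gm^{\sminus1}\Omega_{h}\Gm_\Gm$ and $\Gm_\Fm^{\sminus1}\Omega_{d}\Gm_\Fm$, and exploiting $\Gm_\Gm^{\sminus1}-\Gm_\Fm^{\sminus1} = (1-\Vsb)\Omega_{d}$ to peel off the factors of $(\Omega_{g}+\Psi)$ and $\Psi$ that appear in $\Lambda$.
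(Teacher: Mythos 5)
Your overall plan---strip the outer determinant factors by multiplicativity, clear the interior inverses to get a matrix polynomial in $\Vsb$, reduce to an eigenvalue problem, and solve the resulting scalar quadratics---is the same strategy the paper follows, and your endgame is sound: $\lambda_i\Vsb^2+2\Vsb-1=0$ does rationalise to the two stated roots, giving $2Nm$ solutions. The gap is the middle step, which is the entire content of the lemma, and your setup for it cannot work as described. You propose to pre-multiply the middle factor by $\Gm_{\Gm}^{\sminus1}\!\left(\Vsb\right)$ and post-multiply by $\Gm_{\Fm}^{\sminus1}\!\left(\Vsb\right)$; but the interior inverses sit sandwiched as $\Omega_{h}\Gm_{\Gm}\!\left(\Vsb\right)\Omega_{d}$ and $\Omega_{d}\Gm_{\Fm}\!\left(\Vsb\right)\Omega_{g}$, so this produces the conjugated blocks $\Gm_{\Gm}^{\sminus1}\Omega_{h}\Gm_{\Gm}$ and $\Gm_{\Fm}\Omega_{g}\Gm_{\Fm}^{\sminus1}$ that you yourself flag as the obstacle: they do not simplify, the expression stays a rational (not polynomial) function of $\Vsb$, and no quadratic pencil emerges. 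The paper's proof first conjugates by $\Omega_{d}^{\sminus1}$ on both sides, which moves $\Gm_{\Gm}$ to the right end of its term and $\Gm_{\Fm}$ to the left end of its term, and only then multiplies on the left by $\Gm_{\Fm}^{\sminus1}$ and on the right by $\Gm_{\Gm}^{\sminus1}$ --- the opposite sides from yours --- so both interior inverses cancel exactly. A direct expansion then shows the cubic terms in $\Vsb$ cancel, leaving $\det\left(\Vsb^{2}\Tm-\left(1-2\Vsb\right)\Hm\right)=0$ with $\Tm=\Omega_{g}\Omega_{d}^{\sminus1}\left(\Omega_{g}+\Psi\right)+\Psi\Omega_{d}^{\sminus1}\Omega_{h}$ and $\Hm=\Psi\Omega_{d}^{\sminus1}\left(\Omega_{d}+\Psi\right)$. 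You explicitly defer this verification, so the proposal is a plan rather than a proof.

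A second, telling symptom: the matrix you claim the pencil collapses to is $\Tm\left(\Omega_{g}+\Psi\right)^{\sminus1}\Omega_{d}\Psi^{\sminus1}$, copied from the lemma statement. Carrying out the reduction actually yields $\Hm^{\sminus1}=\left(\Omega_{d}+\Psi\right)^{\sminus1}\Omega_{d}\Psi^{\sminus1}$, i.e.\ the relevant matrix is $\Tm\left(\Omega_{d}+\Psi\right)^{\sminus1}\Omega_{d}\Psi^{\sminus1}$; the factor $\left(\Omega_{g}+\Psi\right)^{\sminus1}$ in the displayed statement appears to be a typo, since the paper's own appendix derives $\Hm=\Psi\Omega_{d}^{\sminus1}\left(\Omega_{d}+\Psi\right)$ and takes eigenvalues of $\Tm\Hm^{\sminus1}$. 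That your route ``confirms'' the typo is further evidence the algebra was not actually performed. One genuine positive: if the pencil were established, your direct factorisation $\det\left(\Vsb^{2}\Lambda-\left(1-2\Vsb\right){\bf I}\right)=\prod_{i}\left(\Vsb^{2}\lambda_{i}-\left(1-2\Vsb\right)\right)$ would shortcut the paper's simultaneous-diagonalisation-by-congruence step, though that step is also what guarantees the $\lambda_{i}$ are real.
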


\proof See Appendix.

The above theorem gives a solution in~$\Vsb$ for all of the points for which~(\ref{eq:det-0}) holds true. %
However, as mentioned above this does not correspond to all of the critical points of~(\ref{eq:cost-diff-def}). %
In order for~$\Vsb^{D}_i$ to be a critical point of~(\ref{eq:cost-diff-def}) it must hold that the magnitude of the maximum eigenvalue of~(\ref{eq:cost-diff-deriv}) must also be~$0$. %
We address this by solving the following numerical evaluation problem.
\begin{theorem}\label{th:maximal-nu}
	The cost difference between the UDP-like and the TCP-like protocols as a function of~$\Vsb$, is defined as 
	\beqn
	J_\Delta^*\left(\Vsb\right) \eqdef J^*\left(\! \Gc_k \! \right)- J^*\left(\! \Fc_k \! \right)&>&0.
	\eeqn
	Has a maximum point that occurs at~$\Vsb^{D^{*}}$, where~$\Vsb^{D^{*}}$ is defined as
	\beqn	\label{eq:maximising-nu}
	\Vsb^{D^{*}}\nquad &\eqdef&\nquad  \sup_{\Vsb^{D}_i\in\vR} \ J_\Delta^*\left(\Vsb^{D}_i\right)  \ \mbox{s.t.}  
	 \left\|\max_{\left\|x\right\|\neq 0 } {{x^{\sf T} f\left(\Vsb^{D}_i\right)   x}\over{x^{\sf T}x}}\right\|\! =\! 0, \ \ 
	\eeqn
	and where~$f\left(\Vsb^{D}_i\right)$ is defined as
	\beqn
	f\left(\Vsb^{D}_i\right)\nquad \ &=& \nquad \ \Gm_{\Gm}\!\!\left(\Vsb^{D}_i\right)\left[ \left(1-2\Vsb^{D}_i \right) \Omega_{d} \right.\nonumber\\
	&&\nqquad \nqquad \nquad   \left.  \sminus \Vsb^{D}_i\!\! \left(1\! \sminus \! \Vsb^{D}_i\right)\!\! \! \left[\Omega_{h}\Gm_{\Gm}\!\!\left(\Vsb^{D}_i\right)\Omega_{d}\!\! +\! \Omega_{d} \Gm_{\Fm}\!\!\left(\Vsb^{D}_i\right) \Omega_{g}\right] \! \right]\!\! \Gm_{\Fm}\!\!\left(\Vsb^{D}_i\right)\!\!. \ \ \ \ \
	\eeqn
	
\end{theorem}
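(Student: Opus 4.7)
The approach is to find the maximum of $J_\Delta^*(\Vsb)$ by locating its critical points on $\vR$ and selecting the one of greatest value. First I would observe that $J_\Delta^*$ is continuous on the compact interval $\vR$, and that the boundary cases are degenerate: at $\Vsb = 0$ the actuation channel is fully blocked so both protocols actuate trivially, and at $\Vsb = 1$ the channel is perfect so both protocols have identical information about $\Vm_k$. In both endpoints the optimal cost expressions~(\ref{eq:TCP-opt-cost}) and~(\ref{eq:UDP-opt-cost}) coincide, so $J_\Delta^*(0) = J_\Delta^*(1) = 0$. By Theorem~\ref{th:analytic-cost-increase}, $J_\Delta^*(\Vsb) > 0$ strictly in the interior. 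Hence the maximum is attained at an interior stationary point of $J_\Delta^*$.

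Next, I would invoke Lemma~\ref{lem:cost-diff-deriv} and write the stationarity condition as the vanishing of the quadratic form $X_k^{\sf T}\Omega_{gp}^{\sf T} f(\Vsb)\, \Omega_{gp} X_k = 0$, where $f$ is the matrix introduced in the theorem statement. For a generic (i.e.\ arbitrary) initial state $X_k$ the vector $\Omega_{gp} X_k$ ranges over the column space of $\Omega_{gp}$, so the derivative vanishes for all such $X_k$ only if $f(\Vsb)$ is at least rank-deficient on that subspace. In particular $\det\!\left(f(\Vsb)\right) = 0$ is a necessary first filter, and Lemma~\ref{lem:Nu-Sols} enumerates these roots as $\Vsb^{D}_1,\dots,\Vsb^{D}_{2Nm}$ via the eigenvalue formulas~(\ref{eq:det-sol}). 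This discretises the search to a finite candidate set.

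The determinant condition is necessary but not sufficient, since $\det f(\Vsb^{D}_i)=0$ only guarantees one zero eigenvalue, whereas the bilinear form $v^{\sf T} f(\Vsb^{D}_i) v$ with $v = \Omega_{gp} X_k$ vanishes only if the positive part of $f(\Vsb^{D}_i)$ collapses. Using the Rayleigh–Ritz characterisation~(\ref{eq:Rayleigh-Ritz}), this positive-part collapse is exactly the constraint $\bigl\|\max_{\|x\|\neq 0} x^{\sf T} f(\Vsb^{D}_i) x / x^{\sf T} x\bigr\| = 0$. Restricting the candidate set to those $\Vsb^{D}_i$ satisfying this constraint therefore isolates the true stationary points of $J_\Delta^*$. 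Taking the supremum of $J_\Delta^*$ over these candidates produces $\Vsb^{D^{*}}$ as defined in~(\ref{eq:maximising-nu}), and by the existence argument of the first step this supremum is attained and coincides with the global maximum.

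\textbf{Main obstacle.} The delicate step is the passage from $\det f(\Vsb) = 0$ to the Rayleigh–Ritz-based sufficient condition: one must argue that the quadratic form in the derivative vanishes in the direction $\Omega_{gp} X_k$, not merely that $f(\Vsb)$ has some zero eigenvalue. Justifying this cleanly requires either restricting to a worst-case $X_k$ in the column space of $\Omega_{gp}^{\sf T}$ or invoking genericity, and this is where the notation in~(\ref{eq:maximising-nu}) implicitly does the heavy lifting. The remaining eigenvalue computation for Lemma~\ref{lem:Nu-Sols} and the enumeration~(\ref{eq:det-sol}) is algebraic and can be verified by direct substitution.
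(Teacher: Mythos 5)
Your proposal follows essentially the same route as the paper's own proof: discretise the candidate set via the determinant roots of Lemma~\ref{lem:Nu-Sols}, filter them with the Rayleigh--Ritz condition on the maximum eigenvalue of $f$, and take the supremum, with existence of an interior maximiser guaranteed by the strict positivity from Theorem~\ref{th:analytic-cost-increase}. Your added observations --- that $J_\Delta^*$ vanishes at both endpoints of $\vR$ so the maximum is interior, and that passing from $\det f(\Vsb)=0$ to vanishing of the quadratic form in the specific direction $\Omega_{gp}X_k$ needs a genericity argument --- are refinements of, not departures from, the paper's argument, and the second one correctly identifies the step the paper also leaves implicit.
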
 

\begin{proof}
Lemma~\ref{lem:Nu-Sols} states that every~$\Vsb^{D}_i$ results in the determinant in~(\ref{eq:det-0}) being equal to~$0$. %
However, this theorem does not guarantee that~$\Vsb^{D}_i$ is a critical point of~(\ref{eq:cost-diff-def}). %
It is also required that the magnitude of the maximum eigenvalue is~$0$ for a given~$\Vsb^{D}_i$. %
Therefore, the condition on the~$\Vsb^{D}_i$ is recast as
\beqn
&& \max_j\ \left|\lambda_j\!\!\left(\vphantom{\Big|}\Gm_{\Gm}\!\!\left(\Vsb^{D}_i\right)\left[ \left(1-2\Vsb^{D}_i \right) \Omega_{d} \right.\right.\right.\nonumber\\
&&\nqquad  \!\!\! \left.\left.\left.\vphantom{\Big|} \sminus \Vsb^{D}_i\!\! \left(1\! \sminus \! \Vsb^{D}_i\right)\!\! \! \left[\Omega_{h}\Gm_{\Gm}\!\!\left(\Vsb^{D}_i\right)\Omega_{d}\!\! +\! \Omega_{d} \Gm_{\Fm}\!\!\left(\Vsb^{D}_i\right) \Omega_{g}\right] \! \right]\!\! \Gm_{\Fm}\!\!\left(\Vsb^{D}_i\right)\!\right)  \right|\!\! = 0.\nonumber
\eeqn
%
Therefore, the condition to ensure that~$\Vsb^{D}_i$ is a critical point becomes
\beqn
 \left|\max_{\left\|x\right\|\neq 0 } {{x^{\sf T}f\left(\Vsb^{D}_i\right) x}\over{x^{\sf T}x}}\right| = 0,
\eeqn
where
\beqn
	f\left(\Vsb^{D}_i\right)\nquad \ &\eqdef& \nquad \ \Gm_{\Gm}\!\!\left(\Vsb^{D}_i\right)\left[ \left(1-2\Vsb^{D}_i \right) \Omega_{d} \right.\nonumber\\
&&\nqquad \nqquad \nquad   \left.  \sminus \Vsb^{D}_i\!\! \left(1\! \sminus \! \Vsb^{D}_i\right)\!\! \! \left[\Omega_{h}\Gm_{\Gm}\!\!\left(\Vsb^{D}_i\right)\Omega_{d}\!\! +\! \Omega_{d} \Gm_{\Fm}\!\!\left(\Vsb^{D}_i\right) \Omega_{g}\right] \! \right]\!\! \Gm_{\Fm}\!\!\left(\Vsb^{D}_i\right)\!\!. \ \ \ \ \
\eeqn
Theorem~\ref{th:analytic-cost-increase} states that in~$\vR$ the cost difference is strictly positive. Therefore, there is at least one maximum in this interval. Taking the supremum of all critical points that lie within~$\vR$ results in the maximising~$\Vsb^{D}_i$ in~$\vR$. 
This is denoted by~$\Vsb^{D^{*}}$. %
This concludes the proof.
\end{proof}
\section{Case Studies}\label{sec:Numerical-Results}
We conduct case studies for two separate systems. %
\subsection{Single Actuator System}\label{sec:pendulum}
The first system we consider is the inverted pendulum system as used in~\cite{1}. The discrete time state space model of the pendulum is provided in~\cite{1} and reported here for convenience
\begin{subequations}\label{eq:pendulum-ss}
\beqn
\Am\nquad \ &=&\nquad \ \left(\!\!\!\!\ba{cccc}
 1.001 & 0.005  & 0.000  & 0.000 \\
 0.35  & 1.001  & -0.135 & 0.000 \\
-0.001 & 0.000  & 1.001  & 0.005 \\ 
-0.375 & -0.001 & 0.590  & 1.001
 \ea\!\!\right), \\
  \Bm \nquad \ &=& \nquad \ \left(\!\!\ba{c} 
  0.001 \\
  0.540 \\
 -0.002 \\
 -1.066 \ea\!\!\right), \quad
\Psi = 2{\bf I}_{N}, \\
\Omega \nquad \ &=& \nquad \   {\bf I}_{N} \otimes \left(\!\!\!\!\ba{cccc}
5 & 0 & 0 & 0 \\
0 & 1 & 0 & 0 \\
0 & 0 & 1 & 0 \\ 
0 & 0 & 0 & 1
\ea\!\!\right) ,
\eeqn
\end{subequations}
the prediction horizon is~$N=80$, and~$\otimes$ denotes the Kronecker product. %
Note that this system has a single actuator, and therefore, there is no difference between the variables~$\mu$,~$\Mm$, and~$\Vsb$. %
Indeed, they are all equivalent to scalars multiplied by an appropriately dimensioned identity matrix. 
In the below the discussion revolves around the variable~$\Vsb$, however, this is interchangeable with either of the other variables for this system. %

The analysis in Section \ref{sec:Numerical-Results} characterises the maximal cost difference between the two protocols. Therein, the system is reduced to a scalar communication channel as in the communication channel in~\cite{1}. %
The expected cost difference as a function of the packet transmission parameter,~$\Mm$, is described by Theorem~\ref{th:analytic-cost-increase} to converge to~$0$ for~$\Mm\rightarrow {\bf I}$ or~$\Mm\rightarrow {\bf 0}$. %
The convergence of the cost difference in these limit cases is observed in Fig.~\ref{fig:log-cost-pendulum}. 
The limit cases of~$\Mm\rightarrow {\bf I}$ or~$\Mm\rightarrow {\bf 0}$ correspond to deterministic cases. 
Note, these are the only cases for which the TCP-like and UDP-like information sets are equivalent, and therefore, the control laws are identical. %
Furthermore, in the limit case when~$\Mm\rightarrow{\bf I}$ the control law for both protocols are also equivalent to a nominal LQG controller that does not account for packet loss in the actuation channel. %
It is seen in Figure~{\ref{fig:log-cost-pendulum} that the characterisation given in Section \ref{sec:Cost-Difference-Analysis} corresponds to the observed behaviour. %
Specifically, there is a maximal point within the~$\vR$ region. %
This maximal point is highlighted with the vertical arrow, which corresponds to the maximal point,~$\Vsb^{D^*} = 0.0031$, as predicted by Theorem~\ref{th:maximal-nu}. %
%
%
%
%
%
Theorem~\ref{th:analytic-cost-increase} states that the TCP-like cost is strictly less than the UDP-like as is also seen in Fig.~\ref{fig:log-cost-pendulum}. %
Additionally, as seen in Corollary~\ref{cor:com-cost}, this means that for a given expected system cost,~$J^*\left(\! \Ic_k \! \right)$, the TCP-like protocol has a smaller channel transition probability,~$\Vsb$. %
This is seen within Fig.~\ref{fig:log-cost-pendulum}.
%
%
%
%

\begin{figure}[!t]
	\captionsetup{justification=centering,margin=0cm,width=\linewidth}
	\centering
\includegraphics[width=1\columnwidth]{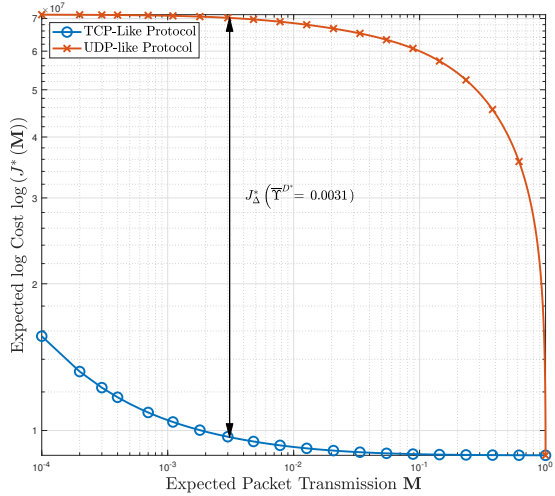}
	\caption[center]{The log cost plot of the TCP-like and the UDP-like protocols for the inverted pendulum system seen in~\cite{1}. The mean of the packet transmission variable,~${\bf {V}}_k$, is varied over the~$\vR$ region.}\label{fig:log-cost-pendulum}
\end{figure}

%
%

The cost difference of the two protocols arises from different control laws. %
Table~\ref{tab:closed-loop-eig} shows the closed loop eigenvalues of the TCP-like and the UDP-like protocols. %
Where the closed loop gain~$\Km_{\Ic_k}$ is defined as the first~$m$ by~$n$ block of the matrix~$\Gm_{\Ic_k}\Omega_{gp}~$. %
Additionally, the packet transmission variable is set at~$\Vsb = 0.9$ for the calculation of the closed loop eigenvalues. %
As shown in Table~\ref{tab:closed-loop-eig}, the TCP-like protocol has a conjugate pair of eigenvalues with a smaller complex component when compared to the UDP-like eigenvalues. %
This suggests the damping of the state response is lower for the UDP-like protocol than the TCP-like protocol. %
%
%
Additionally, the magnitude of the TCP complex-conjugate eigenvalues is closer to the origin, this points to faster decay in the response of these modes.
%
%
%
%
%
\newcolumntype{d}[1]{>{\raggedright\let\newline\\\arraybackslash\hspace{0pt}}p{#1}}
\renewcommand*{\arraystretch}{1.2}
\begin{table}[!t]
	\caption{Closed Loop Eigenvalues of (\ref{eq:pendulum-ss})}\label{tab:closed-loop-eig}
	\centering
	\begin{tabular}{|c|d{2.39cm}|d{2.39cm}|}
		\hline
	$\lambda_i(\Am - \Bm \Km_{\Ic_k})$ & TCP-like Eigenvalues & UDP-like Eigenvalues\\
	\hhline{|=|=|=|}
	$\lambda_1$ &~$0.9497 + 0.0056i$		&~$0.9907 + 0.0201i$ \\\hline
	$\lambda_2$ &~$0.9497 - 0.0056i$		&~$0.9907 - 0.0201i$ \\\hline
	$\lambda_3$ &~$-0.1148$					&~$0.9729$ \\\hline
	$\lambda_4$ &~$0.9978$					&~$0.9382$ \\\hline
	\end{tabular}
\end{table}
\subsection{Multiple Actuator System}\label{sec:multi-act-cs}
For the second case study we consider an arbitrary system with multiple actuators. This system is constructed as,
\begin{subequations}\label{eq:mixed-ss}
\beqn 
\Am \nquad \ &=&\nquad \ \left(\!\!\ba{cc}
 1.03 & 0.005 \\
 0.35 & 1.01 \ea\!\!\right), \quad 
\Bm = \left(\!\!\ba{cc}
 1 & 0 \\
 0 & 1 \ea\!\!\right), \\
\Psi \nquad \ &=& \nquad \  {\bf I}_{N} \otimes \left(\!\!\ba{cc}
 1 & 0 \\
 0 & 1 \ea\!\!\right) , \quad\ \ \ 
\Omega =  {\bf I}_{N} \otimes \left(\!\!\ba{cc}
 1 & 0 \\
 0 & 1 \ea\!\!\right) ,
\eeqn
\end{subequations}
and the prediction horizon is set to~$N=10$. The system in~(\ref{eq:mixed-ss}) has multiple actuators and thus is used to highlight the generality of our results. %
\begin{figure}[!t]
	\captionsetup{justification=centering,margin=2cm,width=\linewidth}
	\centering
	\includegraphics[width=\columnwidth]{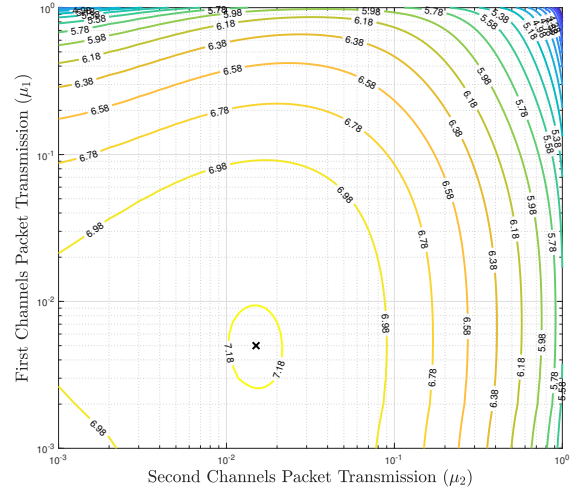}
	\caption[center]{The log cost contour plot of the difference in cost of the UDP-like and the TCP-like protocol for the system~(\ref{eq:mixed-ss}). The mean of the packet transmission variable,~$\Mm$, is varied over the~$\vR^{2}$ region. The point of maximal cost difference is marked with a black cross.}\label{fig:cost-diff-mixed}
\end{figure}
%
%
The expected cost difference, as seen in Fig.~\ref{fig:cost-diff-mixed}, shows existence of a maximal point despite the system having multiple actuators, as predicted by Theorem~\ref{th:analytic-cost-increase}. %
This is marked with a black cross in Fig.~\ref{fig:cost-diff-mixed}. %
This indicates that the results extend to multiple actuators.
Fig.~\ref{fig:cost-diff-mixed} shows that the cost difference is strictly positive when~$\Mm \neq {\bf I}$ and~$\Mm \neq {\bf 0}$. %

When considering the expected cost for this system, the presence of a second actuator means the expected cost is a function of multiple packet transmission variables. %
As a result, there are regions within~$\vR^2$ where the expected cost remains fixed for a range of values of~$\Mm$. %
This behaviour is depicted in Fig.~\ref{fig:TCP-log-cost-mixed} and Fig.~\ref{fig:UDP-log-cost-mixed} for the TCP-like and UDP-like expected costs, respectively.
%
%
%
%
\begin{table}[!t]
	\caption{Closed Loop Eigenvalues of (\ref{eq:mixed-ss})}\label{tab:closed-loop-eig-mixed}
	\centering
	\begin{tabular}{|c|d{2.39cm}|d{2.39cm}|}
		\hline
		$\lambda_i(\Am - \Bm \Km_{\Ic_k})$ & TCP-like Eigenvalues & UDP-like Eigenvalues\\
		\hhline{|=|=|=|}
		$\lambda_1$ &~$-0.1082$				&~$0.4904 + 0.0312i$ \\\hline
		$\lambda_2$ &~$-0.8938$				&~$0.4904 - 0.0312i$ \\\hline
	\end{tabular}
\end{table}
As mentioned for the previous case study, the control laws developed in Section \ref{sec:MPC-Optimal-Control} are different for each protocol. %
As shown in Table~\ref{tab:closed-loop-eig-mixed}, the UDP-like protocol has a complex conjugate pair of eigenvalues. 
%
%
Note that for the purposes of calculating the eigenvalues in Table~\ref{tab:closed-loop-eig-mixed}, the communication channel is not a scalar, in fact we set~$\Mm=\usebox{\smlV}$.
%
%
As with the pendulum case study it is seen that the TCP-like and the UDP-like control laws induce different behaviour in the state trajectories. %
%
%
\begin{figure}[!t]
	\captionsetup{justification=centering,margin=2cm,width=\linewidth}
	\centering
	\includegraphics[width=\columnwidth]{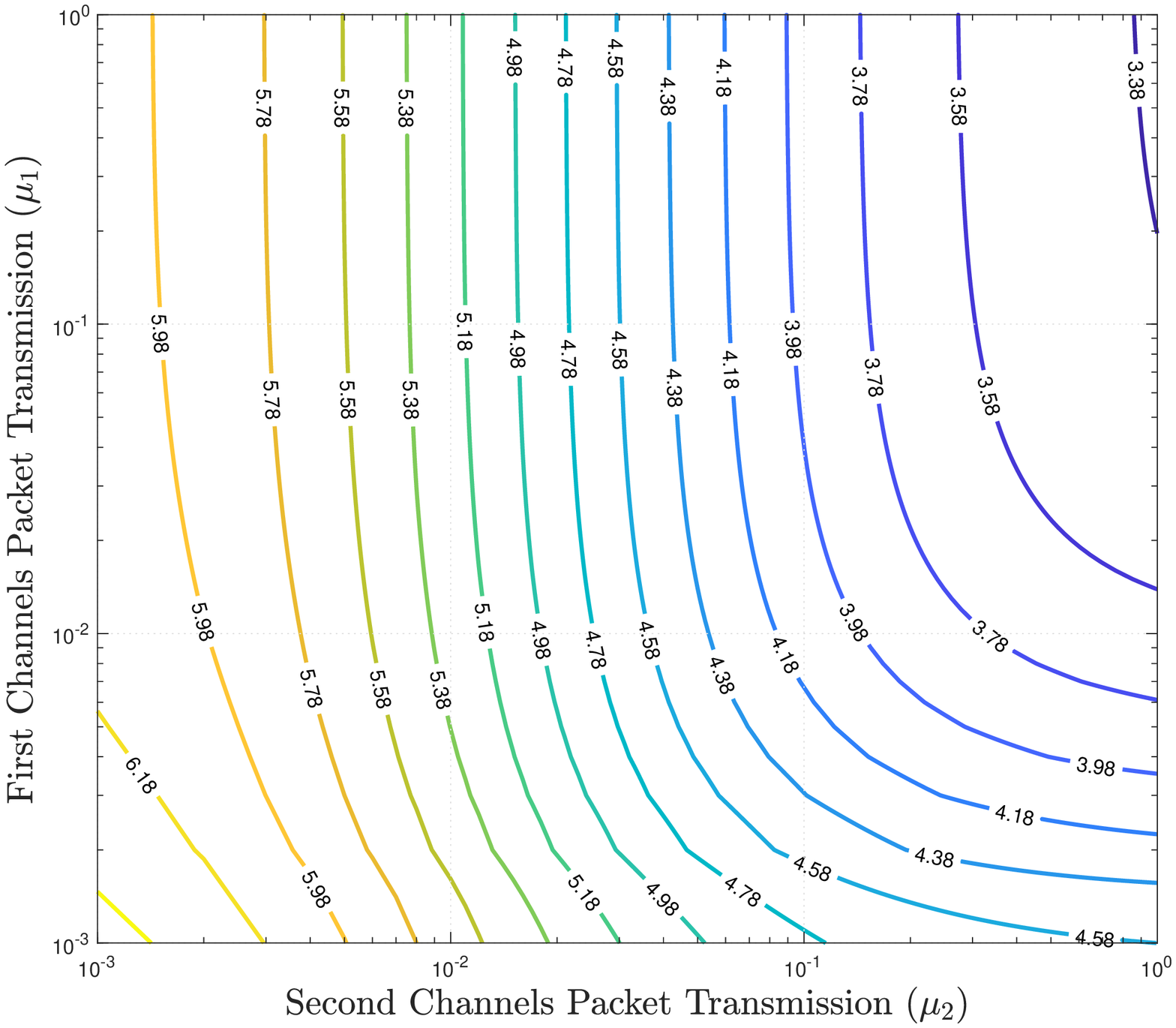}
	\caption[center]{The log expected cost contour plot of the TCP-like protocol for the system~(\ref{eq:mixed-ss}). The mean of the packet transmission variable,~$\Mm$, is varied over the~$\vR^{2}$ region.}\label{fig:TCP-log-cost-mixed}
\end{figure}
\begin{figure}[!t]
	\captionsetup{justification=centering,margin=2cm,width=\linewidth}
	\centering
	\includegraphics[width=\columnwidth]{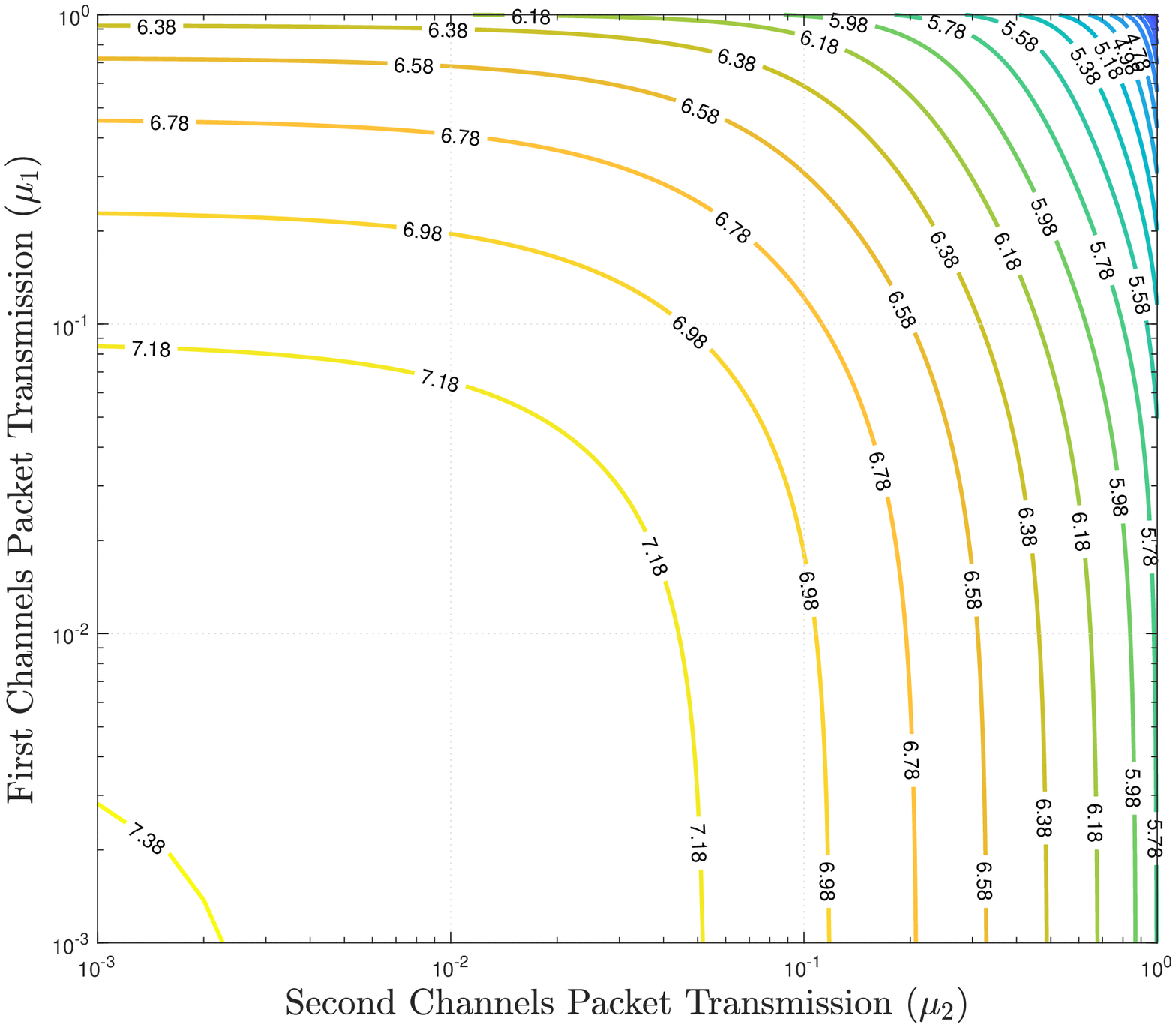}
	\caption[center]{The log expected cost contour plot of the UDP-like protocol for the system~(\ref{eq:mixed-ss}). The mean of the packet transmission variable,~$\Mm$, is varied over the~$\vR^{2}$ region.}\label{fig:UDP-log-cost-mixed}
\end{figure}

\section{Packet Loss Allocation Optimisation}\label{sec:channel-disc} 

\begin{figure}[!t]
	\captionsetup{justification=centering,margin=2cm,width=\linewidth}
	\centering
	\includegraphics[width=\columnwidth]{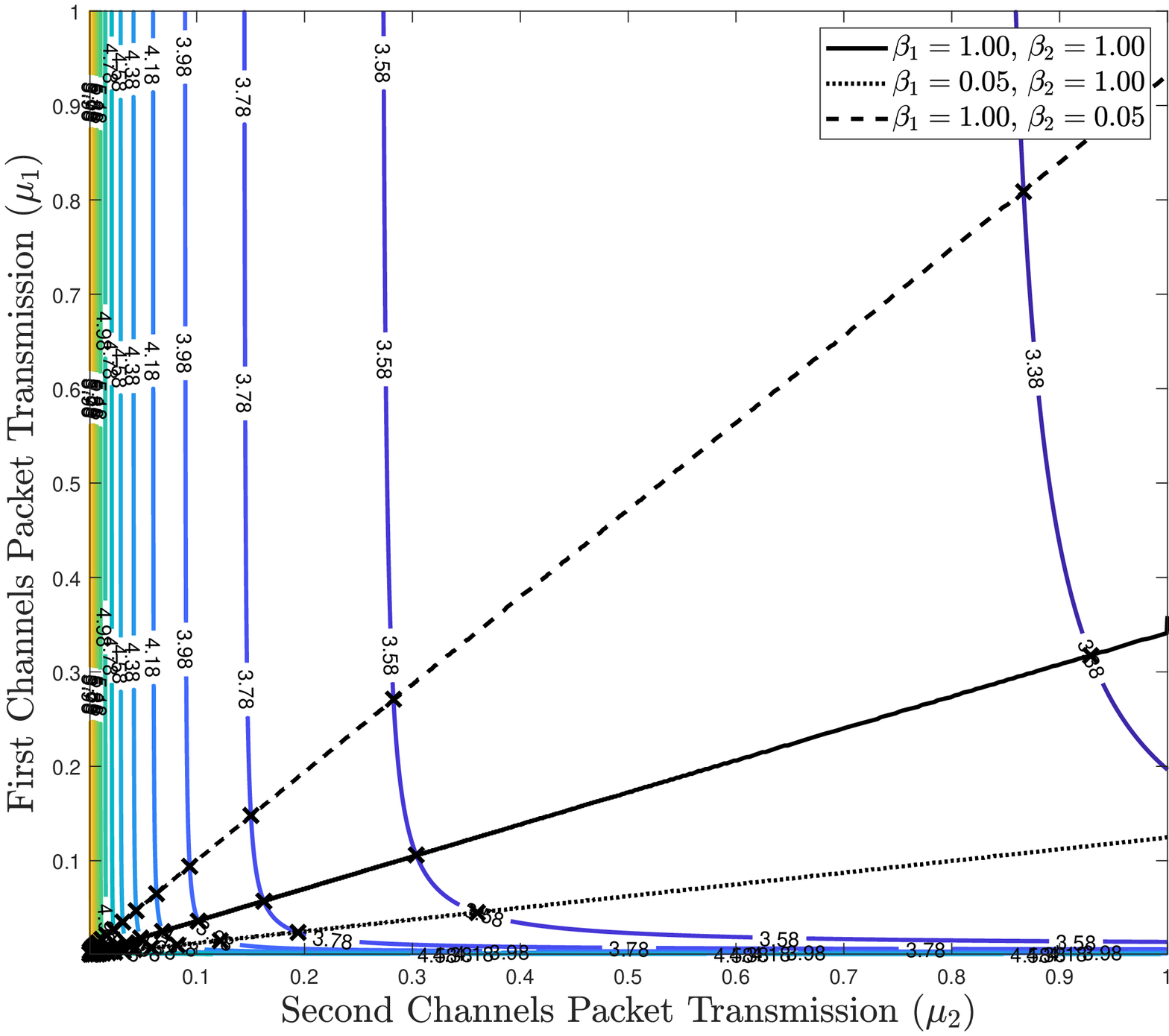}
	\caption[center]{The log expected cost contour plot of the TCP-like protocol for the system~(\ref{eq:mixed-ss}). The mean of the packet transmission variable,~$\Mm$, is varied over the~$\vR^{2}$ region.}\label{fig:TCP-log-cost-mixed-mult}
\end{figure}
\begin{figure}[!t]
	\captionsetup{justification=centering,margin=2cm,width=\linewidth}
	\centering
	\includegraphics[width=\columnwidth]{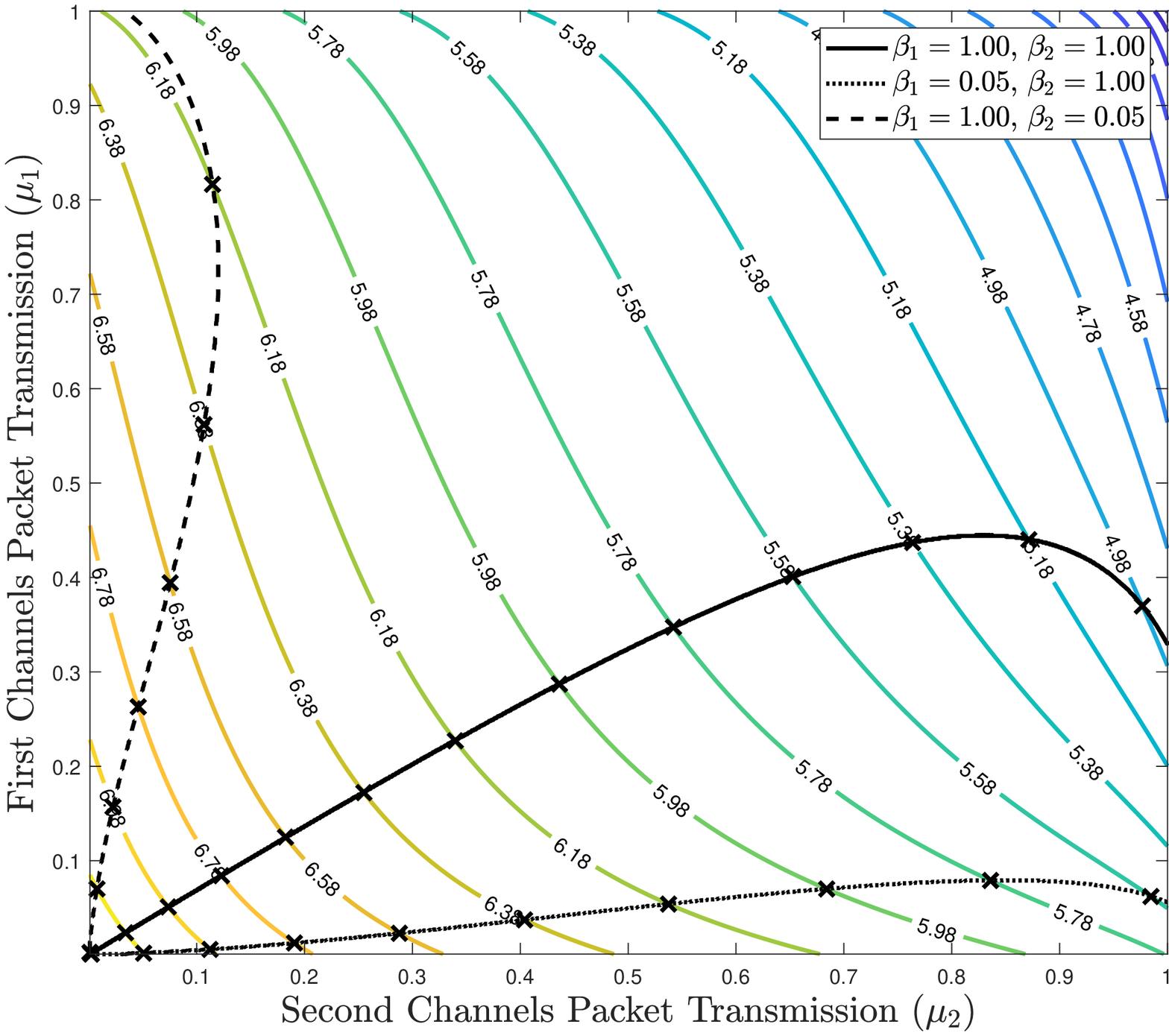}
	\caption[center]{The log expected cost contour plot of the UDP-like protocol for the system~(\ref{eq:mixed-ss}). The mean of the packet transmission variable,~$\Mm$, is varied over the~$\vR^{2}$ region.}\label{fig:UDP-log-cost-mixed-mult}
\end{figure}

In modern communication systems, the probability of packet loss is determined by the performance of multiple processes. %
These processes range from the modulation and coding, which operate in the lower layers, to the routing and flow control, operating in the higher layers. %
While characterising the probability of packet transmission of a modern communication system is challenging in general, the probability of packet transmission decreases monotonically \cite{GallBertDataNetworks} with the resources allocated to the communication system, i.e. bandwidth, power, and delay. %
Indeed, resource allocation is a fundamental problem in communication systems and is often confronted with competing objectives for which the efficiency tradeoffs are difficult to describe analytically. %
In our setting, the extension to the multidimensional actuation channel poses a central question that concerns the design of the communication system of the control system, namely the optimal allocation of communication resources to each of the actuation channel dimensions. %
In the following, we capitalise on the analytical framework developed above and provide a resource allocation framework that optimises the packet loss probabilities for each actuation dimension while satisfying a total budget constraint. %

The set of packet loss probability matrices that achieve a given cost $\alpha\in\mathds{R}_+$ is the set of channel matrices described by %
\beqn\label{eq:alpha_channels}
\Mc_k\left(\! \Ic_k \! \right) =  \left\{ \Mm : J^*\left(\! \Ic_k \! \right) \le \alpha \right\}. 
\eeqn
All the matrices in the set induce a control cost that is upper bounded by $\alpha$ but differ in their use of communication resources, i.e. the packet transmission performance across different dimensions. %
To quantify the use of communication resources in global terms, a communication cost for the system is proposed, defined as %
\beqn\label{eq:channel-cost}
C\left(\Mm\right) \eqdef \sum\limits_{i=1}^{m}\beta_i \mu_i = \tr\left(\betam \Mm\right), 
\eeqn
where~$\betam \in \Sm^{+}_{m\times m }$ is the non-negative definite penalty matrix where each diagonal entry,~$\beta_i \in \R$, is a penalty term corresponding to the cost of communication in that channel. %
The communication cost captures the notion of a total communication budget for the system. %
That being the case, the minimum communication cost is defined as 
\beqn
\label{eq:channel-opt}
C_\alpha\left(\! \Ic_k \! \right) \eqdef \argmin\limits_{\Mm\in\Mc_k\left(\! \Ic_k \! \right)} \ C\left(\Mm\right).
\eeqn
%
%
The maximum channel efficiency yields a communication setup that minimises the total number of packet losses while maintaining the system control performance. %
In view of this, the communication system configuration that minimises the amount of resources allocated to the actuation channel is %
\beqn
\label{eq:opt_M}
\Mm^* \eqdef \min\limits_{\Mm\in\Mc_k\left(\! \Ic_k \! \right)} \ C\left(\Mm\right).
\eeqn
This cost of communication formulation highlights the importance of Corollary~\ref{cor:com-cost}. %
Specifically, if it is costly to communicate over a particular dimension then allocating as few resources as possible whilst maintaining a desired control cost is desirable, indeed, as few resources as dictated by the minimiser $\Mm^*$. %

The optimisation of the communication channel for the pendulum case study presented in Section \ref{sec:pendulum} is straightforward. %
There is a single communication channel, and therefore, the maximum channel efficiency is the packet transmission value that achieves the optimal control cost of~$\alpha$ with equality. %
However, for the system presented in Section \ref{sec:multi-act-cs}(\ref{eq:mixed-ss}) there are fixed regions of expected cost for both the TCP-like protocol and the UDP-like protocol, as shown in Fig.~\ref{fig:TCP-log-cost-mixed-mult} and Fig.~\ref{fig:UDP-log-cost-mixed-mult}, respectively. %
%
%
The black dashed lines plotted in Fig.~\ref{fig:TCP-log-cost-mixed-mult} and Fig.~\ref{fig:UDP-log-cost-mixed-mult} correspond to the channel matrices defined in (\ref{eq:opt_M}) that achieve the maximum channel efficiency. %
The values of~$\betav$ are selected for three different cases. %
For the first case the cost is symmetric in both channel dimensions,  i.e. $\betav$ is set to ${\bf I}$ and for the second and third cases the entries are set to $(0.05, 1)$ and $(1,0.05)$, respectively. %
These cases model the situation in which communication across one of the channel dimensions induces larger cost.
Additionally, we also mark the optimal allocation points for each contour line with a black cross. %
%
%
The same weighting matrices are used for both the UDP-like and the TCP-like figures. %
Note that for the point at which the minimum communication cost is achieved by a matrix $\Mm^*$ with either $\mu_1=1$ or $\mu_2=1$, the optimisation for the channel reduces to the single dimensional case, i.e. one of the channel dimensions is perfect and the other dimension incurs in all packet loss. %
%
%
Furthermore, Fig.~\ref{fig:UDP-log-cost-mixed-mult} shows that for all the three weightings of the channel cost considered, the maximum channel efficiency for $\alpha\ge4.78$ requires at least one of the channels to be perfect. %

The packet loss allocation presented above is not possible with the results~\cite{1}, due to the scalar channel model. %
\section{Conclusion}\label{sec:Conclusion}

%
This paper extends existing results in the literature from a scalar channel to the multidimensional channel and, in doing so, derive a proof of the optimal control law for the TCP-like and UDP-like protocols. %
Additionally, an analytic proof is provided showing that the UDP-like LQG expected cost is strictly greater than the TCP-like expected cost. %
It is also shown that the cost is monotonically decreasing in the packet transmission variable. %
The proofs provided are also valid for non-stationary sequences of packet loss. %
This extension is shown by minor adjustments in the conditions. %
%
%
The maximal cost difference is shown to exist within the~$\vR$ region. 
As seen in Fig.~\ref{fig:cost-diff-mixed}, this maximal cost difference exists in the higher dimensional communication channel. %
In showing that the TCP-like protocol outperforms the UDP-like protocol, in terms of LQG cost for all values of~${\bf 0 \prec \Mm \prec I}~$ it has highlighted the problem of deciding which protocol to use. %
The discussion of channel cost optimisation began in Section~\ref{sec:channel-disc}, however, implementing the TCP-like protocol on a system requires the existence of a perfect acknowledgement link. %
The cost of introducing this communication link is not included in the communication cost design. %
The trade-off between choosing the UDP-like or the TCP-like protocol for a particular system must take this additional cost into account.
%

%
%

\appendix

\begin{Applemma}{\ref{lem:error-proof}}
		 
Consider the system modelled by (\ref{eq:plant}) with access to (\ref{eq:information}). Then the following holds
\begin{subequations}
	\beqn
	\EE \left[ \Em_{k}^{\sf T} \Omega{\Em_{k}} {\Big |}\Fc_k\right]\nquad &=&\nquad  \ \tr\left(\Omega_{l} \Sigma_{\pazocal{W}} \right),\\
	\hfill	\EE \left[ \Em_{k}^{\sf T} \Omega{\Em_{k}} {\Big |}\Gc_k\right]\nquad  &=&\nquad  {\pazocal{U}_k \!\! \left(\!\Gc_k\!\right)\!^{\sf T}} \Vsb \left({\bf I} \! \odot \! \Omega_{g} \! \right) \!\! \left( \! {\bf I} - \!\! \Vsb \! \right){\pazocal{U}_k \!\! \left(\!\Gc_k\!\right)\!} \nonumber \\
	&&\qquad\qquad\qquad\!\! +\tr\left(\Omega_{l} \Sigma_{\pazocal{W}}\right),  
	\eeqn 
\end{subequations}
where $\Omega_{l}=\Lambda\Omega\Lambda$.

The proof is split into two parts, one for TCP-like protocol and one for the UDP-like protocol, respectively.

\end{Applemma}
\begin{proof}
\subsubsection{TCP-like protocol}
The expected error in TCP-like follows from (\ref{eq:TCP-prediction-error}). Substituting this into the left-hand side of (\ref{eq:TCP-err}) yields
\beqn
\EE \left[ \Em_{k}^{\sf T} \Omega{\Em_{k}} {\Big |}\Fc_k\right]&=&
\EE\left[ {\pazocal{W}}_k^{\sf T}\Omega_{l} {\pazocal{W}}_k {\Big |}\Fc_k\right],\label{eq:lem-TCP-err}\\
%
%
&=&\tr\left(\Omega_{l}\Sigma_{\pazocal{W}}\right).\label{ap2}
\eeqn 
This completes the TCP-like part of the proof.
\subsubsection{UDP-like protocol}
The error in UDP-like estimation follows from (\ref{eq:UDP-prediction-error}).
%
%
Substituting this into the left-hand side of (\ref{eq:UDP-err}) gives
\beqn
&&\nqquad \EE \left[ {\Em^{^{\sf T}}_{k}} \Omega{\Em_{k}} {\Big |}\Gc_k\right]=\EE\left[{\pazocal{U}_k}^{\sf T}\left(\Vs - \Vsb\right)\Omega_{g}\left(\Vs - \Vsb\right){\pazocal{U}_k}{\Big |}\Gc_k\right]  \nonumber \\
&& \qquad \qquad \qquad \qquad \qquad + \EE\left[ {\pazocal{W}}_k^{\sf T}\Lambda\Omega{\Lambda {\pazocal{W}}_k} {\Big |}\Gc_k\right], \nonumber 
\eeqn 
\noindent where we use the fact that ${\pazocal{W}}_k $ is zero mean to eliminate the cross terms. Note that the second term is identical to the TCP-like case, as seen in~(\ref{eq:lem-TCP-err}). Therefore,
\beqn
&&\nqquad \EE \left[ \Em_{k}^{\sf T}\Omega{\Em_{k}}{\Big |}\Gc_k\right] =\EE\left[{\pazocal{U}_k}^{\sf T}\Vs\Omega_{g}{\Vs}{\pazocal{U}_k}{\Big |}\Gc_k\right] \nonumber\\
%
&&\qquad\qquad\qquad  -{\pazocal{U}_k \!\! \left(\!\Gc_k\!\right)\!^{\sf T}}\Vsb\Omega_{g}\Vsb{\pazocal{U}_k \!\! \left(\!\Gc_k\!\right)\!}+ \tr\left(\Omega_{l}\Sigma_{\pazocal{W}}\right).\nonumber 
\eeqn 
\noindent It follows from Lemma~\ref{lem:quad-expectation-eval} in the Appendix that
\beqn
\EE \!\left[\!\left. \Em_{k}^{\sf T}\! \Omega{\Em_{k}}\right|\Gc_k \! \right]	\! \!\!
= \pazocal{U}_k \!\! \left(\!\Gc_k\!\right)\!^{\sf T} \! \Vsb \!\! \left({\bf I}\odot \Omega_{g} \! \right) \!\!\! \left({\bf I}- \Vsb\right) \!\! {\pazocal{U}_k \!\! \left(\!\Gc_k\!\right)\!}  + \tr\left(\Omega_{l}\Sigma_{\pazocal{W}} \! \right) \! \! .\nonumber 
\eeqn 
This concludes the proof.
\end{proof}

\begin{lemma}\label{lem:quad-expectation-eval}
	It is proved that:
	\beqn
	&&\nqquad\nquad\EE\left[{\pazocal{U}_k}^{\sf T}\Vs^{^{\sf T}}\Omega_{g}\Vs{\pazocal{U}_k}{\Big |}\Gc_k \right] = {\pazocal{U}_k \!\! \left(\!\Gc_k\!\right)\!}^{\sf T}\Vsb\Omega_{g}\Vsb{\pazocal{U}_k \!\! \left(\!\Gc_k\!\right)\!} \nonumber \\
	&&  \qquad \qquad+ {\pazocal{U}_k \!\! \left(\!\Gc_k\!\right)\!}^{\sf T} \Vsb \left({\bf I} \! \odot \! \Omega_{g} \! \right) \!\! \left( \! {\bf I} - \!\! \Vsb \! \right){\pazocal{U}_k \!\! \left(\!\Gc_k\!\right)\!}, \label{eq:appendix}
	\eeqn 
where ${\bf I}$ is the identity matrix and $\odot$ is the element wise Hadamard product.
\end{lemma}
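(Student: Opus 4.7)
The plan is to condition on $\Gc_k$, pull the deterministic control vector outside the expectation, and then reduce the problem to computing the matrix $\EE[\Vs^{\sf T}\Omega_g\Vs]$ entrywise. Since $\pazocal{U}_k(\Gc_k)$ is by construction a function of the information set $\Gc_k$, it is deterministic under the conditional expectation, so
\beqn
\EE\left[{\pazocal{U}_k}^{\sf T}\Vs^{\sf T}\Omega_g\Vs{\pazocal{U}_k}\,\big|\,\Gc_k\right] \nquad &=& \nquad {\pazocal{U}_k\!(\Gc_k)}^{\sf T}\,\EE\!\left[\Vs^{\sf T}\Omega_g\Vs\right]\,{\pazocal{U}_k\!(\Gc_k)},\nonumber
\eeqn
and the task reduces to evaluating $\EE[\Vs^{\sf T}\Omega_g\Vs]$ explicitly in closed form.

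Next I would exploit the diagonal structure of $\Vs$. Writing $\Vs=\diag(V_1,\ldots,V_{Nm})$ with each $V_i$ an independent Bernoulli random variable with mean $\bar{\mu}_i$ (the $i$-th diagonal entry of $\Vsb$), the $(i,j)$ entry of $\Vs^{\sf T}\Omega_g\Vs$ is $V_i(\Omega_g)_{ij}V_j$. Taking expectations requires two cases: for $i\neq j$, independence gives $\EE[V_iV_j]=\bar{\mu}_i\bar{\mu}_j$; for $i=j$, Bernoulli idempotency ($V_i^2=V_i$) gives $\EE[V_i^2]=\bar{\mu}_i$. Therefore
\beqn
\left(\EE[\Vs^{\sf T}\Omega_g\Vs]\right)_{ij} =
\begin{cases}
\bar{\mu}_i\bar{\mu}_j(\Omega_g)_{ij}, & i\neq j,\\
\bar{\mu}_i(\Omega_g)_{ii}, & i=j.
\end{cases}
\nonumber
\eeqn

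The third step is to recognise this matrix as the sum of two terms with intrinsic meaning. The all-entries contribution $\bar{\mu}_i\bar{\mu}_j(\Omega_g)_{ij}$ is precisely the $(i,j)$ entry of $\Vsb\,\Omega_g\,\Vsb$. Adding and subtracting $\bar{\mu}_i^2(\Omega_g)_{ii}$ on the diagonal yields a residual diagonal matrix with entries $\bar{\mu}_i(1-\bar{\mu}_i)(\Omega_g)_{ii}$, which is exactly $\Vsb\,(\Id\odot\Omega_g)\,(\Id-\Vsb)$ since the Hadamard product with $\Id$ extracts the diagonal of $\Omega_g$ and the matrices $\Vsb$ and $\Id-\Vsb$ are diagonal. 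Substituting this decomposition back into the conditional expectation delivers the claimed identity~(\ref{eq:appendix}).

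The only delicate step is the bookkeeping between the diagonal and off-diagonal contributions, because the identity $V_i^2=V_i$ distinguishes diagonal from off-diagonal terms in a way that must be matched precisely by the Hadamard product $\Id\odot\Omega_g$; the rest is a routine application of conditional expectation and independence.
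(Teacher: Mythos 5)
Your proposal is correct and follows essentially the same route as the paper's proof: both exploit that $\pazocal{U}_k(\Gc_k)$ is deterministic given $\Gc_k$, use $\EE[V_i^2]=\bar\mu_i$ on the diagonal and independence $\EE[V_iV_j]=\bar\mu_i\bar\mu_j$ off the diagonal, and recombine into the $\Vsb\Omega_g\Vsb$ plus $\Vsb(\Id\odot\Omega_g)(\Id-\Vsb)$ decomposition. The paper writes the quadratic form out as an explicit double sum over scalar terms whereas you work entrywise on $\EE[\Vs^{\sf T}\Omega_g\Vs]$, but this is only a difference of bookkeeping, not of substance.
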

\begin{proof}
The left hand side of (\ref{eq:appendix}) is scalar, and therefore
\beqn
&& \nqquad \!\! \EE\!\! \left[ \pazocal{U}_k \!\! \left(\!\Gc_k\!\right)\!^{\sf T} \Vs^{\sf T} \Omega_{g}\Vs{ \pazocal{U}_k \!\! \left(\!\Gc_k\!\right)\!^{\sf T}}{\Big|}\Gc_k\right]\nonumber \\
%
%
&& \nqquad \nquad \ = \! \EE \!\! \left[ \! U_0 \!\! \left(\!\Gc_k\!\right)\!\! {\bf V}_0\Omega_{g_{(1,1)}} \! \!\!\! {\bf V}_0 U_0  \!\! \left(\!\Gc_k\!\right)\!\! {\Big|} \! \Gc_k \!\! \right] \!\! + \! \EE \!\! \left[ \! U_0  \!\! \left(\!\Gc_k\!\right)\! {\bf V}_0\Omega_{g_{(1,2)}}\! \!\!\! {\bf V}_1 U_1  \!\! \left(\!\Gc_k\!\right)\!\! {\Big|} \! \Gc_k \!\! \right] \nonumber \\
&&  \nqquad\dots\  + \EE\left[U_{Nm-1} \!\! \left(\!\Gc_k\!\right)\!\! {\bf V}_{Nm-1}\Omega_{g_{(Nm,Nm)}}{\bf V}_{Nm-1}U_{Nm-1}  \!\! \left(\!\Gc_k\!\right)\!\! {\Big|}\Gc_k\right] ,\nonumber \\
%
%
%
&=& \nquad \sum\limits_{i=1}^{Nm}\left(  U_{i-1}\!\! \left(\!\Gc_k\!\right)\! \Mm_{i-1}\Omega_{g_{(i,i)}} U_{i-1}\!\! \left(\!\Gc_k\!\right)\!\!\vphantom{\sum\limits_{j=1,j\neq i}^{Nm}}\right.\nonumber \\
&&\left. + \sum\limits_{j=1,j\neq i}^{Nm}  U_{i-1}\!\! \left(\!\Gc_k\!\right)\! \Mm_{i-1} \Omega_{g_{(i,j)}}\Mm_{j-1}  U_{j-1}\!\! \left(\!\Gc_k\!\right)\!\!,\right)\nonumber \\
%
%
%
&=& \nquad \sum\limits_{i=1}^{Nm} \sum\limits_{j=1}^{Nm}  U_{i-1}\!\! \left(\!\Gc_k\!\right)\!\!\Mm_{i-1}\Omega_{g_{(i,i)}}\left(1-\Mm_{i-1}\right) U_{i-1}\!\! \left(\!\Gc_k\!\right)\!\! \nonumber \\
&& \qquad +  U_{i-1}\!\! \left(\!\Gc_k\!\right)\!\!\Mm_{i-1} \Omega_{g_{(i,j)}}\Mm_{j-1}  U_{j-1}\!\! \left(\!\Gc_k\!\right)\!\!, \nonumber\\
&=& \nquad {\pazocal{U}_k \!\! \left(\!\Gc_k\!\right)\!}^{\sf T}\Vsb \! \left({\bf I} \! \odot \! \Omega_{g} \! \right) \!\! \left( \! {\bf I} - \!\! \Vsb \! \right){\pazocal{U}_k \!\! \left(\!\Gc_k\!\right)\!} + {\pazocal{U}_k \!\! \left(\!\Gc_k\!\right)\!}^{\sf T}\Vsb\Omega_{g} \!\! \Vsb{\pazocal{U}_k \!\! \left(\!\Gc_k\!\right)\!} ,\nonumber
\eeqn 	
where bracketed subscripts represent the $(i,j)$-th element. %
This concludes the proof. %
\end{proof}

\begin{Appthm}{\ref{th:contol}}
Consider the closed-loop systems shown in Fig.~\ref{fig:TCP-like-diagram} and Fig.~\ref{fig:UDP-like-diagram}, with plant dynamics given in~(\ref{eq:plant}), protocol dependent information sets given in~(\ref{eq:information}) and controller cost function given in~(\ref{eq:opt-lqg}). Then the optimal cost for the TCP-like protocol is
%
%
\beqn
J^*\left(\Fc_k\right)=\!\! X_k^{\sf T}\!\!\left( \Qm +\Omega_{p} \right)\!\!X_k\!+\! \tr\left(\Sigma_{\pazocal{W}}\Omega_{l}\!\right)\!\!-\!  X_k^{\sf T}\! \Fm^{\sf T}\!\!\Gm^{\sminus1}\!\!\left(\!\Fc_k\!\right)\!\!\Vsb \Fm X_k ,\nonumber
\eeqn
and the optimal cost for the UDP-like protocol is
\beqn
J^*\left(\Gc_k\right)=\!\! X_k^{\sf T}\!\!\left( \Qm +\Omega_{p} \right)\!\!X_k\!+\! \tr\left(\Sigma_{\pazocal{W}}\Omega_{l}\!\right)\!\!-\!  X_k^{\sf T}\! \Fm^{\sf T}\!\!\Gm^{\sminus1}\!\!\left(\!\Gc_k\!\right)\!\!\Vsb \Fm X_k.\nonumber
\eeqn

As with Lemma~\ref{lem:error-proof} the proof is split into two parts to account for the two protocols.
\end{Appthm}
\begin{proof}
	\setcounter{subsubsection}{0}
	\subsubsection{Optimal Cost for the TCP-like Protocol}
	\noindent Substituting (\ref{eq:TCP-err}) into (\ref{eq:opt-lqg-simplest}), noting that under the TCP-like protocol the error term does not depend on $\pazocal{U}_k$, gives
	\beqn
	\label{eq:plant7}
 J^*\!\! \left( \! \Fc_k \! \right) \nquad  &=& \nquad  X_k^{\sf T} \! \left(\Qm + \Omega_{p}\right)X_k + \tr\left(\Omega_{l}\Sigma_{\pazocal{W}}\right)\nonumber \\
	%
	&& \nqquad + \min_{\pazocal{U}_k \!\! \left(\!\Gc_k\!\right)}\!\left\{\! \pazocal{U}_k \!\! \left(\!\Gc_k\!\right)\!^{\sf T}\Vsb \!\! \left(2\Fm X_k \!+\!\left(\Omega_{g}\Vsb\!+\!\Psi \right){\pazocal{U}_k \!\! \left(\!\Gc_k\!\right)\!}\right) \right\}\!\!. \label{eq:TCP-cost-minimisation}
	\eeqn 
Note that $\left(\Omega_{g} \Vsb +\Psi\right)$ is positive definite, and therefore,~(\ref{eq:TCP-cost-minimisation}) is convex. Taking the derivative of the cost with respect to ${\pazocal{U}_k}$ yields
	\beqn
	{{\partial 	J^*\left(\Fc_k\right)}  \over {\partial {\pazocal{U}_k}}} = 2\Vsb\left(\Fm X_k + \left(\Omega_{g} \Vsb +\Psi\right){\pazocal{U}_k \!\! \left(\!\Fc_k\!\right)\!} \right).\label{eq:information0}
	\eeqn
	\noindent Solving for all $\Vsb \neq {\bf 0}$, the minimising value of~${\pazocal{U}_k \!\! \left(\!\Fc_k\!\right)\!} $ is found to be
	\newsavebox{\smlAun}
	\savebox{\smlAun}{$\left(\begin{smallmatrix} 1.03&0.005\\0.35&1.01\end{smallmatrix}\right)$}
	\beqn
	\pazocal{U}_k^*\!\!\left(\!\Fc_k\!\right)\! \eqdef - \left(\Omega_{g} \Vsb +\Psi\right)^{\sminus1} \Fm X_k .
	\eeqn
	Denoting $\left(\Omega_{g} \Vsb +\Psi\right)$ by $\Gm^{\sminus1}\!\!\left(\!\Fc_k\!\right)\!$ and substituting ${\Upsilon_{k|\mathcal {F}_k}^{*}}$ into (\ref{eq:TCP-cost-minimisation}) results in the optimal expected cost for the operator, described by
	\beqn
	J^*\left(\Fc_k\right)=\!\! X_k^{\sf T}\!\!\left( \Qm +\Omega_{p} \right)\!\!X_k\!+\! \tr\left(\Sigma_{\pazocal{W}}\Omega_{l}\!\right)\!\!-\!  X_k^{\sf T}\! \Fm^{\sf T}\!\!\Gm^{\sminus1}\!\!\left(\!\Fc_k\!\right)\!\!\Vsb \Fm X_k .\nonumber
	\eeqn
	This concludes the TCP-like part of the proof.
	%
	\subsubsection{Optimal Cost for the UDP-like Protocol}
	\noindent Combining (\ref{eq:UDP-err}) and (\ref{eq:opt-lqg-simplest}) the optimal cost function for the UDP-like protocol is given by:
	\beqn
	&&\nquad J^*\left(\Gc_k\right) = X_k^{\sf T}\left( \Qm +\Omega_{p} \right)X_k +\tr\left(\Omega_{l} \Sigma_{\pazocal{W}}\right)  \nonumber\\
	&& \nqquad + \min_{\pazocal{U}_k \!\! \left(\!\Gc_k\!\right)}\!\left\{\! \pazocal{U}_k \!\! \left(\!\Gc_k\!\right)\!^{\sf T} \! \Vsb \!\! \left( \! 2\Fm X_k \! + \! \left( \! \Omega_{g} \!\! \Vsb \!\! + \! \Psi \! + \!\! \left({\bf I} \! \odot \! \Omega_{g} \! \right) \!\! \left( \! {\bf I} - \!\! \Vsb \! \right) \!\! \right) \!\! {\pazocal{U}_k \!\! \left(\!\Gc_k\!\right)\!}\right) \!\! \right\}\!\!.\nonumber
	\eeqn
	Following the same process as with the TCP-like case, noting that $\left(\Omega_{g}\Vsb\!+\!\Psi + \left({\bf I} \! \odot \! \Omega_{g} \! \right) \!\! \left( \! {\bf I} - \!\! \Vsb \! \right)\right)$ is positive definite, and therefore the minimisation is convex yields
	\beqn
	{{\partial 	J^* \! \left( \! \Gc_k \! \right) } \over {\partial {\pazocal{U}_k}}} =  2\Vsb \!\! \left(\Fm X_k \!+\!\left(\Omega_{g}\Vsb\!+\!\Psi + \left({\bf I} \! \odot \! \Omega_{g} \! \right) \!\! \left( \! {\bf I} - \!\! \Vsb \! \right) \! \right){\pazocal{U}_k \!\! \left(\!\Gc_k\!\right)\!}\right),\nonumber
	\eeqn
	meaning that the optimal value of~$\pazocal{U}_k^*\!\! \left(\!\Gc_k\!\right)\!$ is
	\beqn
	{\pazocal{U}_k^*\!\! \left(\!\Gc_k\!\right)\!} \eqdef - \left(\Psi +\left({\bf I} \! \odot \! \Omega_{g} \! \right) \!\! \left( \! {\bf I} - \!\! \Vsb \! \right) + \Omega_{g}\Vsb\right)^{\sminus1} \Fm X_k.
	\eeqn
	\noindent Re-labelling $\left(\Psi +\left({\bf I} \! \odot \! \Omega_{g} \! \right) \!\! \left( \! {\bf I} - \!\! \Vsb \! \right) + \Omega_{g}\Vsb\right)$ as $\Gm\!\!\left(\!\Gc_k\!\right)\!$ and substituting ${\pazocal{U}_k^*\!\! \left(\!\Gc_k\!\right)\!}$ into (\ref{eq:TCP-cost-minimisation}) yields the optimal expected cost for the operator:
	\beqn\label{eq:UDP-opt}
	J^*\left(\Gc_k\right)=\!\! X_k^{\sf T}\!\!\left( \Qm +\Omega_{p} \right)\!\!X_k\!+\! \tr\left(\Sigma_{\pazocal{W}}\Omega_{l}\!\right)\!\!-\!  X_k^{\sf T}\! \Fm^{\sf T}\!\!\Gm^{\sminus1}\!\!\left(\!\Gc_k\!\right)\!\!\Vsb \Fm X_k.
	\eeqn
	This concludes the proof.
\end{proof}

\begin{Applemma}{\ref{lem:cost-diff-deriv}}
The cost difference between the UDP-like and the TCP-like protocols is defined as
\beqn
J_\Delta^*\left(\Vsb\right) \eqdef J^*\left(\Gc_k\right)- J^*\left(\Fc_k\right)&>&0.
\eeqn
The derivative of this cost difference is
\beqn
{{\partial J_\Delta^*\left(\Vsb\right) }\over{\partial \Vsb }}	\nquad 
&=&\nquad   X_k^{\sf T}\Fm^{\sf T} \left(
\Gm^{\sminus1}\!\!\left(\!\Gc_k\!\right)\!\left(\vphantom{\Big|} \left(1-2\Vsb \right) \Omega_{d} \right.\right.\nonumber\\
&&\nqquad\nqquad\nqquad \ \  \left.\left.  - \Vsb \!\! \left( \!\! 1 \sminus \!\! \Vsb \! \right) \!\!\! \left[ \! \Omega_{h}\Gm^{\sminus1}\!\!\left(\!\Gc_k\!\right)\!\Omega_{d} + \Omega_{d} \Gm^{\sminus1}\!\!\left(\!\Fc_k\!\right)\! \Omega_{g}\right] \! \right) \!\! \Gm^{\sminus1}\!\!\left(\!\Fc_k\!\right)\! \right) \!\! \Fm X_k \! , \quad \ 
\eeqn
where $\Omega_{d}=\left({\bf I} \odot \Omega_{g}\right)$ and $\Vsb \in \vR$.
\end{Applemma}
\begin{proof}
	In order to proceed with the proof we analyse the matrices $\Gm^{\sminus1}\!\!\left(\!\Fc_k\!\right)\!$ and $\Gm^{\sminus1}\!\!\left(\!\Gc_k\!\right)\!$. %
	We define the mappings $\Gm_{\Fm}\left(\alpha\right):\R \rightarrow \R^{n\times n}$ and $\Gm_{\Gm}\left(\alpha\right):\R \rightarrow \R^{n\times n}$ as
	\begin{subequations}\label{eq:G-mappings}
		\beqn
		\Gm_{\Fm}\left(\alpha\right)&=& \left(\alpha\Omega_{g} +\Psi\right)^{\sminus1}, \\
		\Gm_{\Gm}\left(\alpha\right)&=& \left(\alpha\Omega_{h} + \Omega_{d} + \Psi\right)^{\sminus1}, 
		\eeqn
	\end{subequations}
	where $\alpha\in\vR$, $\Gm_{\Fm} \!\! \left( \! \Vsb \! \right) \!\! =\Gm^{\sminus1}\!\!\left(\!\Fc_k\!\right)\!$, $\! \Gm_{\Gm} \!\!\! \left( \! \Vsb \! \right) \!\! =\Gm^{\sminus1}\!\!\left(\!\Gc_k\!\right)\!$, $\Omega_{d}= \left({\bf I} \odot \Omega_{g}\right) $, and $\Omega_{h} = \left(\Omega_{g} - \left({\bf I} \odot \Omega_{g}\right) \right)$. %
	It should be noted that these two functions share the same inherent structure. Specifically, both of these functions are specialisations of the more general function
	\beqn
	\Gm_{\bf I}\left(\alpha,\Am,\Bm\right)= \left(\alpha\Am +\Bm\right)^{\sminus1}, 
	\eeqn
	where $\Am\in\R^{nN\times nN}$, $\Bm \in \R^{nN \times nN}$, and $\Gm_{\bf I}\left(\alpha,\Am,\Bm\right)$ is defined as the mapping $\Gm_{\bf I}\left(\alpha,\Am,\Bm\right):\R\times\R^{nN\times nN}\times\R^{nN\times nN} \rightarrow \R^{nN\times nN}$. %
	This results in a single function that is equivalent to~(\ref{eq:G-mappings}) where it is noted that $\Gm_{\bf I} \left(\alpha, \Omega_{g}, \Psi\right) = \Gm_{\Fm} \left(\alpha \right) $ and $\Gm_{\bf I} \left(\alpha, \Omega_{h}, \left[\Omega_{d}+\Psi\right] \right) = \Gm_{\Gm} \left(\alpha \right)$. %
	Additionally, all inputs of the function $\Gm_{\bf I}\left(\alpha,\Am,\Bm\right)$ are symmetric and $\Bm$ is a diagonal positive definite matrix for both of the protocols considered. %
	From this point, due to the fact that the matrices $\Am$ and $\Bm$ are constants, we simplify the notation to $\Gm_{\bf I}\left(\alpha\right)$. %
	Note that the results below apply to both the UDP-like and the TCP-like protocols. The function $\Gm_{\bf I}\left(\alpha \right)$ has a first derivative given by
	\beqn
	{{\partial}\over{\partial \beta }}\Gm_{\bf I}\left(\beta\right)\nquad \ &=&\nquad \ {{\partial}\over{\partial \beta }} \left(\beta\Am +\Bm\right)^{\sminus1} \nonumber \\
	&=&\nquad \  \minus\left(\beta\Am +\Bm\right)^{\sminus1}\!\!\left[{{\partial}\over{\partial \beta }} \left(\beta\Am +\Bm\right)\right]\!\! \left(\beta\Am +\Bm\right)^{\sminus1}\!\!\! ,\ \ \ \ \ \label{eq:inverse-derivitive-sub}
	\eeqn
	where in~(\ref{eq:inverse-derivitive-sub}) the derivative is recast according to~\cite[17.3(a)]{seber}. Executing the derivative operator results in
	\beqn
	{{\partial}\over{\partial \beta }}\Gm_{\bf I}\left(\beta\right)\nquad \
	%
	&=&\nquad \ 
	-\Gm_{\bf I}\left(\beta\right)\Am \Gm_{\bf I}\left(\beta\right). \label{eq:G-deriv}
	\eeqn
	With this in mind the derivative of the cost difference is computed to be
	\beqn
	\nquad {{\partial J_\Delta^* \!\! \left( \! \Vsb \! \right) }\over{\partial \Vsb }}	\nquad  
	&=& \nquad \left[ {{\partial}\over{\partial \Vsb }} \Vsb\left(1-\Vsb\right) \tr \left(\! \Gm_{\Gm} \!\!\! \left( \! \Vsb \! \right) \!\! \Omega_{d}\Gm_{\Fm} \!\! \left( \! \Vsb \! \right) \!\! \Lm \right)\right]\!\!,\!\!  \\
	%
	%
	&=& \nquad \left[ {{\partial}\over{\partial \Vsb }} \Vsb\left(1-\Vsb\right) \right]\tr \left(\! \Gm_{\Gm} \!\!\! \left( \! \Vsb \! \right) \!\! \Omega_{d}\Gm_{\Fm} \!\! \left( \! \Vsb \! \right) \!\! \Lm \right) \nonumber \\
	&& \! \! \nquad  +  \Vsb\left(1-\Vsb\right) \tr \left(\left[ {{\partial \! \Gm_{\Gm} \!\!\! \left( \! \Vsb \! \right) \!\! }\over{\partial \Vsb }} \right] \Omega_{d}\Gm_{\Fm} \!\! \left( \! \Vsb \! \right) \!\! \Lm \right)\!\! \nonumber  \\
	&& \! \! \nquad  +  \Vsb\left(1-\Vsb\right) \tr \left(\! \Gm_{\Gm} \!\!\! \left( \! \Vsb \! \right) \!\!  \Omega_{d} \left[ {{\partial \Gm_{\Fm} \!\! \left( \! \Vsb \! \right) \!\! }\over{\partial \Vsb }}\right]\Lm \right)\!\!,\!\! 
	\eeqn
	where in the third line the property~\cite[17.5 pg.353]{seber} in conjunction with the product rule is utilised. %
	At this stage, implementing the result seen in (\ref{eq:G-deriv}) yields
	\beqn
	\nquad {{\partial J_\Delta^* \!\! \left( \! \Vsb \! \right) }\over{\partial \Vsb }}\nquad 
	%
	%
	%
	&=&\nquad \  X_k^{\sf T}\Fm^{\sf T} \left[\Gm_{\Gm}\!\!\left(\Vsb\right)\left[ \left(1-2\Vsb \right) \Omega_{d} \right.\right.\nonumber\\
	&&\nqquad \nqquad \nqquad  \left.\left.  - \Vsb \!\! \left( \! 1 \sminus \! \Vsb \! \right) \!\!\! \left[\Omega_{h}\Gm_{\Gm}\!\!\left(\Vsb\right)\Omega_{d} + \Omega_{d} \Gm_{\Fm}\!\!\left(\Vsb\right) \Omega_{g}\right] \right]\Gm_{\Fm}\!\!\left(\Vsb\right) \right]\Fm X_k,
	\eeqn
	which corresponds to (\ref{eq:cost-diff-deriv}). %
	This concludes the proof.
\end{proof}

\begin{Applemma}{\ref{lem:Nu-Sols}}
The relation
\beqn
&&	\det\left(\vphantom{\Big|}\Gm_{\Gm}\!\!\left(\Vsb\right)\left[ \left(1-2\Vsb \right) \Omega_{d} \right.\right. \nonumber \\
&& \nqquad\nqquad \left.\left.
 - \Vsb \!\! \left( \! 1 \sminus \! \Vsb \! \right) \!\!\! \left[ \Omega_{h} \Gm_{\Gm}\!\!\left(\Vsb\right) \Omega_{d} + \Omega_{d} \Gm_{\Fm}\!\!\left(\Vsb\right) \Omega_{g} \right] \right]\Gm_{\Fm}\!\!\left(\Vsb\right)\vphantom{\Big |} \!\!\! \right) = 0 
\eeqn
has $2Nm$ many solutions, specifically,
\begin{subequations}
	\beqn
	\Vsb^D_{2i-1} &=&  {{1}\over{1 + \sqrt{1+\lambda_i}}},\\ 
	\Vsb^D_{2i} &=&  {{1}\over{1 - \sqrt{1+\lambda_i}}},
	\eeqn	
\end{subequations}
where $\Vsb^D_i $ correspondences to the $i$-th solution for (\ref{eq:det-0}) and $\lambda_i$ is the $i$-th eigenvalue of the matrix,
\beqn
\left(\Omega_{g}\Omega_{d}^{\sminus1}\left(\Omega_{g} + \Psi\right) + \Psi \Omega_{d}^{\sminus1}\Omega_{h} \right) \left(\Omega_{g} + \Psi\right)^{\sminus1} \Omega_{d}\Psi^{\sminus1}.
\eeqn
\end{Applemma}
\begin{proof}
	The first thing to establish is that the determinant of a matrix product is equal to the product of the respective matrix determinants~\cite[4.31(a)]{seber}. %
	Specifically, for any two matrices of equal dimensions
	\beqn
	\det\left(\Cm\Dm\right)=\det\left(\Cm\right)\det\left(\Dm\right). \label{eq:det-rule}
	\eeqn
	Therefore, multiplying~(\ref{eq:det-0}) from the left and right by $\det\left(\Omega_{d}^{\sminus1}\right)\det\left(\Gm_{\Gm}^{\sminus1}\left(\Vsb\right)\right)$ and $\det\left(\Gm_{\Fm}^{\sminus1}\left(\Vsb\right)\right)\det\left(\Omega_{d}^{\sminus1}\right)$ respectively gives
	\beqn
	\det \!\! \left( \!\!\vphantom{\Big|} \left[ \!\! \left( \!\! 1 \sminus 2 \! \Vsb \! \right) \!\!  \Omega_{d}^{\sminus1} \!
    \sminus \! \Vsb \!\! \left( \! 1 \sminus \! \Vsb \! \right) \!\!\! \left[ \Omega_{d}^{\sminus1} \! \Omega_{h} \Gm_{\Gm}\!\!\left( \! \Vsb \! \right) \! + \! \Gm_{\Fm}\!\!\left( \! \Vsb \! \right) \! \Omega_{g}\Omega_{d}^{\sminus1} \!\! \right] \! \right] \!\! \vphantom{\Big |}\right) \!\! = \! 0 \! . \nonumber 
	\eeqn
	Multiplying the left by $\det\left(\Gm_{\Fm}^{\sminus1}\left(\Vsb\right)\right)$ and the right by $\det\left(\Gm_{\Gm}^{\sminus1}\left(\Vsb\right)\right)$ and then re arranging yields
	\beqn
	&&\nqquad\nqquad \det\left(\vphantom{\Big|}  \Vsb^2\left[ \Omega_{g} \Omega_{d}^{\sminus1}\left[\Omega_{g} + \Psi\right] + \Psi \Omega_{d}^{\sminus1}\Omega_{h} \right] \right. \nonumber \\
	&& \nquad \left.  +2\Vsb \Psi \Omega_{d}^{\sminus1} \left(\Omega_{d}+\Psi\right) -  \Psi \Omega_{d}^{\sminus1}\left(\Omega_{d}+\Psi\right) \vphantom{\Big |}\right) = 0. \label{eq:det-before-inverse}
	\eeqn
	The above is a quadratic in the scalar, $\Vsb$. %
	However, it should be noted that due to the determinant this equation is actually of order $Nm$ in $\Vsb$. %
	To see this, note that $\Vsb$ is pulled out of the determinant as follows
	\beqn
	&&\nqquad\nqquad \Vsb^{Nm}\det\left(\vphantom{\Big|} - {{1}\over{\Vsb^2}} \Psi \Omega_{d}^{\sminus1}\left(\Omega_{d}+\Psi\right)  + 2{{1}\over{\Vsb}} \Psi \Omega_{d}^{\sminus1} \left(\Omega_{d}+\Psi\right)   \right. \nonumber \\
	&&  \left.  +  \left[ \Omega_{g} \Omega_{d}^{\sminus1}\left[\Omega_{g} + \Psi\right] + \Psi \Omega_{d}^{\sminus1}\Omega_{h} \right] \vphantom{\Big |}\right) = 0.\nonumber 
	\eeqn
	Additionally, by assumption $\Vsb\neq 0 $, therefore dividing by $-\Vsb^{Nm}$ gives
	\beqn
	&& \nqquad \nqquad \det\left(\vphantom{{{1}\over{\Vsb^2}}}\right. \left({{1}\over{\Vsb^2}} - 2{{1}\over{\Vsb}} \right) \underbrace{\left[  \Psi \Omega_{d}^{\sminus1}\left(\Omega_{d}+\Psi\right) \right]}_{\Hm}    \nonumber \\
	&&     -  \underbrace{\left[ \Omega_{g} \Omega_{d}^{\sminus1}\!\! \left[\Omega_{g} + \Psi\right] + \Psi \Omega_{d}^{\sminus1}\!\! \Omega_{h} \right]}_{\Tm} \!\!{\left. \vphantom{{{1}\over{\Vsb^2}}} \right)}\!\! = 0.  \nonumber
	\eeqn
	From the above it is seen that the matrix $\Hm $ is positive definite and diagonal. Additionally, with some minor manipulation it is seen that $\Tm$ is symmetric. %
	It is stated in~\cite[16.51(c)]{seber} that there exists a matrix $\Sm\in \R^{Nm \times Nm}$ such that $\Sm^{\bf T}\Hm\Sm ={\bf I}$ and $\Sm^{\bf T} \Tm\Sm= \Lambda$. Therefore, multiplying the left and right by $\det\left(\Sm^{\bf T}\right)$ and $\det\left(\Sm\right)$ respectively gives
	\beqn
	\det\left(\vphantom{{{1}\over{\Vsb^2}}} {{1}\over{\Vsb^2}}{\bf I} - 2{{1}\over{\Vsb}} {\bf I}   -  \Lambda \right)\!\! = 0,  \label{eq:det-diag}
	\eeqn
	where $\Lambda$ is the diagonal matrix where all diagonal entries, $\lambda_i$, are the eigenvalues of the matrix $\Tm \Hm^{\sminus1}$. The above is equivalent to
	\beqn
	\prod_{i=1}^{Nm}\left( {{1}\over{\Vsb^2}} - 2{{1}\over{\Vsb}} -  \lambda_i \right)\!\! = 0.  \label{eq:prod-quad}
	\eeqn
	Note that~(\ref{eq:prod-quad}) is $Nm$ polynomials on $\Vsb$ with solutions
	\beqn
	\Vsb^D_{2i-1} &=&  {{1}\over{1 + \sqrt{1+\lambda_i}}},\nonumber \\ 
	\Vsb^D_{2i} &=&  {{1}\over{1 - \sqrt{1+\lambda_i}}},  \label{eq:quad-sol}
	\eeqn
	where $\Vsb_i^{D}$ is the $i$-th solution of~(\ref{eq:det-0}). %
	Additionally,~(\ref{eq:quad-sol}) corresponds to~(\ref{eq:det-sol}). This concludes the proof
\end{proof}

\bibliographystyle{IEEEtran}
\bibliography{References}

\end{document}